\renewcommand*{\baselinestretch}{1.25}
\newtheorem{theorem}{Theorem}[section]
\newtheorem{lemma}{Lemma}[section]
\newtheorem{proposition}{Proposition}[section]
\newtheorem{corollary}{Corollary}[section]
\theoremstyle{definition}
\newtheorem{definition}{Definition}[section]
\newtheorem*{rmk*}{Remark}
\newtheorem{rmk}{Remark}[section]
\DeclareMathOperator{\supp}{supp}
\DeclareMathOperator*{\esssupp}{ess\,supp}
\DeclareMathOperator{\sign}{sign}
\numberwithin{equation}{section}
    \renewcommand*{\section}{\@startsection{section}{1}{\z@}%
    {10pt}{5pt}{\reset@font\normalsize\bfseries}}
    \renewcommand*{\subsection}{\@startsection{subsection}{2}{\z@}%
    {5pt}{5pt}{\reset@font\normalsize\mdseries\itshape}}
    \renewcommand*{\subsubsection}{\@startsection{subsubsection}{3}{\z@}%
    {5pt}{5pt}{\reset@font\normalsize\mdseries\itshape}}
\def\@seccntformat#1{\csname the#1\endcsname.\quad}
\def\@listi{\leftmargin\leftmargini
  \topsep=.5\baselineskip %0pt
  \partopsep=0pt \parsep=0pt \itemsep=0pt}
\let\@listI\@listi
\def\@listii{\leftmargin\leftmarginii
  \labelwidth\leftmarginii \advance\labelwidth-\labelsep
  \topsep=0pt \partopsep=0pt \parsep=0pt \itemsep=0pt}
\def\@listiii{\leftmargin\leftmarginiii
  \labelwidth\leftmarginiii \advance\labelwidth-\labelsep
  \topsep=0pt \partopsep=0pt \parsep=0pt \itemsep=0pt}
\def\@listiv{\leftmargin\leftmarginiv
  \labelwidth\leftmarginiv \advance\labelwidth-\labelsep
  \topsep=0pt \partopsep=0pt \parsep=0pt \itemsep=0pt}
\title{No arbitrage and lead-lag relationships}
\author{Takaki Hayashi\thanks{Keio University, Graduate School of Business Administration, 4-1-1 Hiyoshi, Yokohama 223-8526, Japan}
\thanks{Department of Business Administration, Graduate School of Social Sciences, Tokyo Metropolitan University, Marunouchi Eiraku Bldg. 18F, 1-4-1 Marunouchi, Chiyoda-ku, Tokyo 100-0005 Japan}
\thanks{CREST, Japan Science and Technology Agency}
\and
Yuta Koike
\thanks{Graduate School of Mathematical Sciences, The University of Tokyo, 3-8-1 Komaba, Meguro-ku, Tokyo 153-8914 Japan}
\footnotemark[2]
\thanks{The Institute of Statistical Mathematics, 10-3 Midori-cho, Tachikawa, Tokyo 190-8562, Japan}
\footnotemark[3]}
\begin{document}

\maketitle

\begin{abstract}

The existence of time-lagged cross-correlations between the returns of a pair of assets, which is known as the lead-lag relationship, is a well-known stylized fact in financial econometrics. Recently some continuous-time models have been proposed to take account of the lead-lag relationship. Such a model does not follow a semimartingale as long as the lead-lag relationship is present, so it admits an arbitrage without market frictions. 
In this paper we show that they are free of arbitrage if we take account of market frictions such as the presence of minimal waiting time on subsequent transactions or transaction costs. 

\vspace{3mm}

\noindent \textit{Keywords}: Arbitrage; Cheridito class; Conditional full support; Discrete trading; Lead-lag relationship; Transaction costs.

\end{abstract}

\section{Introduction}

% idealized market
In finance absence of arbitrage is considered as one of the intrinsic properties of financial markets. 
% In finance absence of arbitrage is considered as one of the most fundamental assumptions.  
For an idealized market, the absence of arbitrage is characterized by the existence of an equivalent martingale measure (see e.g.~\citet{DS1994}). As a consequence, in such a market the price processes \textit{necessarily} follow a semimartingale under the no arbitrage assumption because the semimartingale property is invariant under change of measures.  

% lead-lag relationship
Nevertheless, empirical work occasionally suggests some evidence conflicting this conclusion. The existence of cross-correlations across returns of multiple assets, which is known as the \textit{lead-lag relationship}, is one of the representative phenomena among such evidence. Empirical investigation of lead-lag relationships in financial markets has drawn attention in financial econometrics for a long time. 
%They have been observed across different assets as well as different trading venues. 
A notable example is the lead-lag relationship between a stock index and the index futures, where many authors have reported that the futures lead the index (see e.g.~\cite{KKK1987,SW1990,Chan1992,deJN1997,HA2014}). 
More generally, lead-lag relationships between an asset and its derivatives have been examined in several articles such as \cite{deJD1998,BOW2017}. 
Also, it is commonly observed that large firm returns tend to lead small firm returns (cf.~\cite{LM1990llag,CSS2011,PT2004}). 
In addition, \citet{Reno2003} have pointed out that lead-lag relationships play a key role to explain the \textit{Epps effect}, another well-known empirical fact named after \citet{Epps1979}.   

% no arbitrage under market friction
One common way to fill in the gap between the theoretical implication and the empirical suggestion as above is to take various sorts of market frictions into account. Discreteness of transaction times is one of the primal sources of such frictions which are well-investigated in the literature.  
%One common way to fill in the gap between the theoretical implication and the empirical suggestion illustrated above is to take account of market frictions such as the discreteness of transaction times and transaction costs, which are two primal sources of frictions investigated in the literature. 
A pioneer work in this area is \citet{Cheridito2003}, which has shown that the (geometric) fractional Brownian motion (fBm) model admits no arbitrage if we impose a (constant) minimal waiting time on subsequent transactions. Since the fBm is not a semimartingale unless its Hurst index is $1/2$, this result is out of the scope of the traditional characterization of no arbitrage for frictionless markets. The class of trading strategies with this type of restriction is named the ``Cheridito class'' by \citet{JPS2009}, and they have provided a characterization for a market consisting one risk-free asset and one risky asset to admit no arbitrage in the Cheridito class (see Lemma 1 of \cite{JPS2009}). This result has been extended to the case of multiple risky assets in \citet{Sayit2013}. There are also some articles attempting to relax the requirement of imposing a minimal waiting time on subsequent transactions. For example, \citet{BSV2011} have introduced a new class of strategies called the \textit{delay-simple strategies}, which, roughly speaking, allow the minimal waiting time on subsequent transactions to be random to some extent. They have given a characterization for a market having no arbitrage within the class of delay-simple strategies. More broadly, \citet{Bender2012} has admitted \textit{all} simple strategies and given a characterization of no-arbitrage within such a class. 

% transaction costs 
Another dominant source of market frictions is the existence of transaction costs. 
% discrete-time results
For discrete-time models, an adequate characterization for absence of arbitrage under transaction costs have been established in \citet{KS2001}, \citet{KRS2002} and \citet{Schachermayer2004}. 
For continuous-time models, \citet{Guasoni2006} has provided a sufficient condition for a market being free of arbitrage under constant proportional transaction costs. As a special case, it has shown that the geometric fBm model has no arbitrage under transaction costs. The result of \cite{Guasoni2006} is further refined by several authors such as \cite{GRS2008,GRS2010,SV2011,BPS2015,GLR2012}. 
In particular, \citet{GRS2010} have proven a version of the fundamental theorem of asset pricing under constant proportional transaction costs for the continuous and one-risky asset case. \citet{GLR2012} have extended this result to the situation that the price process is possibly discontinuous and the transaction costs are not constant.    
% NA2
% superfriction

% aim of this paper
To our knowledge, however, there is no work which studies the no arbitrage property of a market model with a lead-lag relationship under market frictions. The aim of this paper is to shed a light on this issue, and we especially focus on continuous-time models. Rather recently, \citet{HRY2013} have proposed a lead-lag model in continuous-time (see also \citet{RR2010}), which is based on Brownian motion driven modeling and contains traditional It\^o processes as a special case, hence it is readily compatible with the traditional mathematical finance theory. 
Related models have been subsequently studied by several authors such as \cite{HA2014,AM2014,CCI2016,BOW2017} for empirical work and \cite{HK2016,HK2017,BCP2017,Koike2017laglan,Koike2017sllag,Chiba2017} from a statistical point of view, but there is no work in the context of mathematical finance. We intend to bridge the gap between those two areas in this work.    

% organization
This paper is organized as follows. 
Section \ref{sec:review} briefly reviews some no arbitrage results under market frictions in the existing literature. 
Section \ref{sec:main} presents main results obtained in this paper. 
All the proofs are collected in Section \ref{sec:proofs}. 

\section*{Notation}

For $x=(x^1,\dots,x^d)\in\mathbb{R}^d$, we denote by $\|x\|$ the Euclid norm of $x$, i.e.~$\|x\|^2=\sum_{i=1}^d(x^i)^2$. 
$\mathrm{Leb}$ denotes the Lebesgue measure on $\mathbb{R}$. 
For a topological space $\Xi$, we write $\mathcal{B}(\Xi)$ the Borel $\sigma$-field of $\Xi$. 
Given a $d$-dimensional process $X=(X_t)_{t\in[0,T]}$, we denote the $i$-th component process of $X$ by $X^i=(X^i_t)_{t\in[0,T]}$ for every $i=1,\dots,d$. 

\section{No arbitrage under market frictions: A review}\label{sec:review}

% GRS2008, Section 3
%We consider a market $S=(S^0,S^1,\dots,S^d)$ with one riskless asset $S^0$ and $d$ risky assets defined on a filtered probability space $(\Omega,\mathcal{F},\mathbb{F}=(\mathcal{F}_t)_{t\in[0,T]},P)$ satisfying the usual hypotheses. We also assume that $\mathcal{F}_0$ is trivial and $\mathcal{F}_T=\mathcal{F}$. The riskless asset $S^0$ is used as num\'eraire and thus assumed to be constantly equal to one. 

% Bender2012, Section 2
We consider a discounted market with one risk-free asset and $d$ risky assets traded on a finite time horizon $[0,T]$. The risk-free asset is used as a num\'eraire and thus assumed to be constantly equal to one. The risky assets are modeled by a $d$-dimensional stochastic process $S=(S_t)_{t\in[0,T]}$ defined on a filtered probability space $(\Omega,\mathcal{F},\mathbb{F}=(\mathcal{F}_t)_{t\in[0,T]},P)$ satisfying the usual conditions of completeness and right-continuity of the filtration. %We assume that $S^i_t>0$ for any $t\in[0,T]$ and $i=1,\dots,d$.
We assume that $S$ is c\`adl\`ag and adapted to $\mathbb{F}$. 

\subsection{Trading with simple strategies}

First we review some no arbitrage results when we restrict the class of admissible strategies to discrete trading. 
We mention that Sections 3.3--3.4 of \citet{BSV2011} have provided an excellent survey (and some original results) on this topic for the univariate case with an emphasis on models driven by (mixed) fractional Brownian motion. 
\begin{definition}
(a) A $d$-dimensional process $\Phi=(\Phi_t)_{t\in[0,T]}$ is called a \textit{simple strategy} (with respect to $\mathbb{F}$) if it is of the form
\begin{equation}\label{simple}
\Phi_t=\phi_01_{\{0\}}(t)+\sum_{j=0}^{n-1}\phi_j1_{(\tau_j,\tau_{j+1}]}(t),\qquad t\in[0,T],
\end{equation}
where $n\in\mathbb{N}$, $0=\tau_0\leq\tau_1\leq\cdots\leq\tau_n$ are a.s.~finite $\mathbb{F}$-stopping times, and $\phi_j$ is a $d$-dimensional $\mathcal{F}_{\tau_j}$-measurable random vector for every $j=0,1,\dots,n$. We denote by $\mathcal{S}(\mathbb{F})$ the set of all simple strategies with respect to $\mathbb{F}$. 

\noindent(b) A simple strategy $\Phi$ of the form \eqref{simple} is said to belong to the \textit{Cheridito class} (with respect to $\mathbb{F}$) if there exists a constant $h>0$ such that $\tau_{j+1}-\tau_j\geq h$ for all $j=0,1,\dots,n-1$. We denote by $\mathcal{L}(\mathbb{F})$ the set of all simple strategies belonging to the Cheridito class with respect to $\mathbb{F}$. %which is called the \textit{Cheridito class}.
\end{definition}

\begin{rmk}
The name ``Cheridito class'' is owed to \citet{JPS2009}, and it comes from the seminal work of \citet{Cheridito2003}: See Remark \ref{rmk:fBm}.
\end{rmk}

Throughout the paper, we only consider self-financing strategies. Since the market is already discounted, this means that the value process of a simple strategy $\Phi$ of the form \eqref{simple} with initial capital $v\in\mathbb{R}$ is given by
\[
V_t(\Phi;v)=v+\sum_{j=1}^{n-1}\phi_j^\top(S_{t\wedge\tau_{j+1}}-S_{t\wedge\tau_j}),\qquad t\in[0,T].
\]

\begin{definition}
%(a) 
A simple strategy $\Phi$ is called an \textit{arbitrage} if $P(V_T(\Phi;0)\geq0)=1$ and $P(V_T(\Phi;0)>0)>0$.
%\noindent(b) The market $S$ is said to be \textit{free of arbitrage with simple strategies} if any $\Phi\in\mathcal{S}(\mathbb{F})$ is not an arbitrage.
%\noindent(c) The market $S$ is said to be \textit{free of arbitrage in the Cheridito sense} if any $\Phi\in\mathcal{L}(\mathbb{F})$ is not an arbitrage.
\end{definition}

\begin{rmk}\label{rmk:fBm}
\citet{Cheridito2003} has shown that there is no arbitrage in the Cheridito class for the (geometric) fractional Brownian motion model.
\end{rmk}

Now we present a sufficient condition for a market being free of arbitrage in the Cheridito class, which is investigated in \citet{Sayit2013}. Let $X=(X_t)_{t\in[0,T]}$ be a $d$-dimensional c\`adl\`ag $\mathbb{F}$-adapted process. For any two $\mathbb{F}$-stopping times $\tau_1,\tau_2$ such that $\tau_1\leq\tau_2$ a.s., we set
\[
A_i^+=\{X^i_{\tau_1}<X^i_{\tau_2}\},\qquad
A_i^-=\{X^i_{\tau_1}>X^i_{\tau_2}\},\qquad,
i=1,\dots,d.
\]
\begin{definition}
We say that $X$ satisfies the \textit{joint $\mathbb{F}$-conditional up and down (CUD) condition with respect to $\mathcal{L}(\mathbb{F})$} if for any $h\in(0,T)$ and any two stopping times $\tau_1\leq\tau_2$ with $\tau_2\geq\tau_1+h$ a.s., and any $B\in\mathcal{F}_{\tau_1}$ with $P(B)>0$, the following holds
\[
P\left(\left[\bigcap_{i\in I^B}A_i^{\alpha_i}\right]\cap B\right)>0,
\]
whenever $I^B\neq\emptyset$, where $\alpha_1,\dots,\alpha_d\in\{+,-\}$ and $I^B$ is the set of all $k\in\{1,\dots,d\}$ such that $P(\{X_{\tau_1}^k\neq X_{\tau_2}^k\}\cap B)>0$. 
\end{definition}
\begin{rmk}
The above condition is first considered in \citet{JPS2009} for the univariate case. The name ``CUD condition'' is owed to \citet{BSV2011}. 
\end{rmk}
\citet{Sayit2013} have shown that this condition is sufficient for the absence of arbitrage in the Cheridito class:
\begin{proposition}[\cite{Sayit2013}, Proposition 1]\label{prop:sayit}
If $S$ satisfies the joint $\mathbb{F}$-CUD condition with respect to $\mathcal{L}(\mathbb{F})$, then there is no arbitrage in the Cheridito class. 
\end{proposition}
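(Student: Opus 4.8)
The plan is to prove the contrapositive: suppose $\Phi\in\mathcal{L}(\mathbb{F})$ of the form \eqref{simple} is an arbitrage, i.e.\ $V_T(\Phi;0)\geq0$ a.s.\ and $P(V_T(\Phi;0)>0)>0$, and derive a contradiction with the joint CUD condition. Since $\Phi$ lies in the Cheridito class, there is a constant $h>0$ with $\tau_{j+1}-\tau_j\geq h$ for all $j$. I would argue by a backward induction on the number $n$ of trading dates, showing that in fact $V_T(\Phi;0)=0$ a.s., which contradicts $P(V_T(\Phi;0)>0)>0$.

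For the induction, the key observation is to look at the \emph{last} trade. Write
\[
V_T(\Phi;0)=\sum_{j=1}^{n-1}\phi_j^\top(S_{\tau_{j+1}}-S_{\tau_j})
=W+\phi_{n-1}^\top(S_{\tau_n}-S_{\tau_{n-1}}),
\]
where $W=\sum_{j=1}^{n-2}\phi_j^\top(S_{\tau_{j+1}}-S_{\tau_j})$ is $\mathcal{F}_{\tau_{n-1}}$-measurable. The idea is that, conditionally on $\mathcal{F}_{\tau_{n-1}}$, the increment $S_{\tau_n}-S_{\tau_{n-1}}$ can move each nondegenerate coordinate in either direction with positive probability, so a nonzero $\phi_{n-1}$ on a positive-probability set forces the final position to be negative somewhere. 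Concretely, fix a sign vector $\alpha=(\alpha_1,\dots,\alpha_d)$ and consider the event $B_\alpha\in\mathcal{F}_{\tau_{n-1}}$ on which $\sign(\phi_{n-1}^i)$ matches a prescribed pattern; on the portion of $B_\alpha$ where some coordinate of $\phi_{n-1}$ is nonzero, apply the joint CUD condition (with $\tau_1=\tau_{n-1}$, $\tau_2=\tau_n$, and the sign pattern chosen \emph{opposite} to that of $\phi_{n-1}$, restricted further to the event $W<\phi_{n-1}^\top(S_{\tau_{n-1}}-S_{\tau_n})$ if $W$ is not identically zero) to conclude that $V_T(\Phi;0)<0$ with positive probability, contradicting the arbitrage requirement $V_T(\Phi;0)\geq0$. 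Hence $\phi_{n-1}=0$ on the set $\{X_{\tau_{n-1}}\neq X_{\tau_n}\}$ up to a $P$-null set; a short additional argument (replacing $\phi_{n-1}$ by $0$ changes nothing in the value process on the relevant set) then lets us discard the last trade and reduces $\Phi$ to a strategy with $n-1$ dates, which still lies in $\mathcal{L}(\mathbb{F})$.

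The base case $n=1$ (equivalently, after the reduction, a strategy that does no trading) gives $V_T(\Phi;0)=0$ identically, closing the induction. I expect the main obstacle to be the bookkeeping in the inductive step: one has to handle the interaction between the $\mathcal{F}_{\tau_{n-1}}$-measurable ``past'' $W$ and the last increment carefully, partitioning $B_\alpha$ further according to the value of $W$ so that the event fed into the CUD condition is genuinely in $\mathcal{F}_{\tau_{n-1}}$ and has positive probability, and one must make sure the minimal-waiting-time gap $h$ is available at each stage so that the CUD condition applies to the pair $(\tau_{n-1},\tau_n)$. The degenerate coordinates (those with $X^k_{\tau_{n-1}}=X^k_{\tau_n}$ a.s.\ on the conditioning set) contribute nothing to the increment and must simply be excluded, which is exactly why the CUD condition is phrased in terms of the index set $I^B$. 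Once this reduction lemma is in place, the conclusion follows immediately.
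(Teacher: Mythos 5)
The paper does not actually prove this proposition --- it is quoted verbatim from Sayit (2013), Proposition 1 --- so your attempt has to be judged on its own. Your overall plan (induction on the number of trades, using the CUD condition to generate adverse price moves) is the right one and is essentially how the cited reference argues, but the inductive step as you describe it has a genuine gap. You propose to apply the CUD condition on the event $B_\alpha$ ``restricted further to the event $W<\phi_{n-1}^\top(S_{\tau_{n-1}}-S_{\tau_n})$''. That event depends on $S_{\tau_n}$ and is therefore \emph{not} in $\mathcal{F}_{\tau_{n-1}}$, so it is not an admissible conditioning set $B$ for the CUD condition; and no amount of ``partitioning $B_\alpha$ according to the value of $W$'' can fix this, because the CUD condition controls only the \emph{signs} of the increments $S^i_{\tau_n}-S^i_{\tau_{n-1}}$ and gives no lower bound on their magnitude. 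Hence on the part of $B_\alpha$ where $W>0$ you cannot force $\phi_{n-1}^\top(S_{\tau_n}-S_{\tau_{n-1}})<-W$, and your conclusion that a nonzero $\phi_{n-1}$ yields $V_T(\Phi;0)<0$ with positive probability does not follow. Note also that an arbitrage only constrains the terminal value, so you have no a priori sign information on $W$ to lean on.

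The step can be repaired by reorganizing what the adverse move is used for. Take $C=\{W<0\}\in\mathcal{F}_{\tau_{n-1}}$ and suppose $P(C)>0$; intersect $C$ with the sign-pattern events of $\phi_{n-1}$ to get a positive-probability $B\in\mathcal{F}_{\tau_{n-1}}$. If $I^B=\emptyset$ then $S_{\tau_n}=S_{\tau_{n-1}}$ a.s.\ on $B$ and $V_T=W<0$ there; otherwise apply CUD with signs opposite to those of $\phi_{n-1}$ to obtain a positive-probability event inside $B$ on which $\phi_{n-1}^\top(S_{\tau_n}-S_{\tau_{n-1}})\leq0$, whence $V_T\leq W<0$. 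Either way this contradicts $V_T\geq0$ a.s., so $W\geq0$ a.s.; the induction hypothesis applied to the truncated strategy then gives $W=0$ a.s., and a final application of CUD (your base-case argument) kills the last term. Alternatively, one can run the argument forward, iterating CUD over the successive intervals $[\tau_j,\tau_{j+1}]$ to build a single positive-probability event on which \emph{every} increment moves against $\phi_j$ simultaneously, so that all summands are $\leq0$ at once and the bookkeeping with $W$ disappears entirely. Either of these closes the gap; the rest of your outline (the Cheridito gap $h$ making CUD applicable to consecutive stopping times, the role of $I^B$ for degenerate coordinates, the base case) is fine.
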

\begin{rmk}
In the case of $d=1$, the converse of Proposition \ref{prop:sayit} is also true by Lemma 1 of \cite{JPS2009}. However, when $d>1$, the converse of Proposition \ref{prop:sayit} does not necessarily hold; see page 617 of \cite{Sayit2013}.
\end{rmk}
We note that the joint $\mathbb{F}$-CUD condition is invariant under component-wise transform by strictly monotone functions:
\begin{proposition}[\cite{Sayit2013}, Proposition 2]\label{prop:cud-trans}
For each $i=1,\dots,d$, let $f_i:\mathbb{R}\to\mathbb{R}$ be a strictly monotone function. $S$ satisfies the joint $\mathbb{F}$-CUD condition with respect to $\mathcal{L}(\mathbb{F})$ if and only if the $d$-dimensional process $(f_1(S^1_t),\dots,f_d(S^d_t))$ $(t\in[0,T])$ satisfies the joint $\mathbb{F}$-CUD condition with respect to $\mathcal{L}(\mathbb{F})$. 
\end{proposition}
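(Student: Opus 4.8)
The plan is to observe that the joint $\mathbb F$-CUD condition constrains only the \emph{signs} of the increments $S^i_{\tau_2}-S^i_{\tau_1}$ on the relevant events, and that a componentwise strictly monotone map either preserves or reverses each such sign while leaving the index set $I^B$ untouched. Thus the proof reduces to a sign-relabelling argument, and strict monotonicity will be used in two distinct ways: through injectivity (to match the index sets) and through order-monotonicity (to match the events $A_i^\alpha$).

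First I would fix $h\in(0,T)$, two stopping times $\tau_1\le\tau_2$ with $\tau_2\ge\tau_1+h$ a.s., and $B\in\mathcal F_{\tau_1}$ with $P(B)>0$, and put $Y^i_t:=f_i(S^i_t)$. Each $f_i$ is Borel measurable (being monotone), so $Y=(Y^1,\dots,Y^d)$ is $\mathbb F$-adapted, and the CUD condition involves $Y$ only through the random variables $Y^i_{\tau_1},Y^i_{\tau_2}$, so it is well posed for $Y$. Let $A_i^{\pm}$ be the events attached to $S$ as in the text, and let $\widetilde A_i^{+}=\{f_i(S^i_{\tau_1})<f_i(S^i_{\tau_2})\}$, $\widetilde A_i^{-}=\{f_i(S^i_{\tau_1})>f_i(S^i_{\tau_2})\}$ be those attached to $Y$. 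Two pointwise facts do all the work. Since $f_i$ is injective, $\{f_i(S^i_{\tau_1})\neq f_i(S^i_{\tau_2})\}=\{S^i_{\tau_1}\neq S^i_{\tau_2}\}$; consequently $P(\{Y^i_{\tau_1}\neq Y^i_{\tau_2}\}\cap B)>0$ if and only if $P(\{S^i_{\tau_1}\neq S^i_{\tau_2}\}\cap B)>0$, so the set $I^B$ is the same whether it is read off from $S$ or from $Y$. And since $f_i$ is strictly monotone, $\widetilde A_i^{\alpha}=A_i^{\alpha}$ for both $\alpha\in\{+,-\}$ when $f_i$ is increasing, whereas $\widetilde A_i^{+}=A_i^{-}$ and $\widetilde A_i^{-}=A_i^{+}$ when $f_i$ is decreasing.

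Encoding this, set $\varepsilon_i=+1$ if $f_i$ is increasing and $\varepsilon_i=-1$ if $f_i$ is decreasing, and write $\varepsilon_i\alpha_i\in\{+,-\}$ for the sign obtained from $\alpha_i$ by flipping it precisely when $\varepsilon_i=-1$. Then
\[
\bigcap_{i\in I^B}\widetilde A_i^{\alpha_i}=\bigcap_{i\in I^B}A_i^{\varepsilon_i\alpha_i}\qquad\text{for every }(\alpha_1,\dots,\alpha_d)\in\{+,-\}^d,
\]
and the map $(\alpha_1,\dots,\alpha_d)\mapsto(\varepsilon_1\alpha_1,\dots,\varepsilon_d\alpha_d)$ is a bijection of $\{+,-\}^d$. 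Hence, for the fixed data $(h,\tau_1,\tau_2,B)$, the assertion ``$P\big(\big[\bigcap_{i\in I^B}\widetilde A_i^{\alpha_i}\big]\cap B\big)>0$ for all $(\alpha_i)_i$'' is word for word the assertion ``$P\big(\big[\bigcap_{i\in I^B}A_i^{\beta_i}\big]\cap B\big)>0$ for all $(\beta_i)_i$''. Since $(h,\tau_1,\tau_2,B)$ was arbitrary and the requirement $I^B\neq\emptyset$ is the same in both cases, $S$ satisfies the joint $\mathbb F$-CUD condition with respect to $\mathcal L(\mathbb F)$ if and only if $Y=(f_1(S^1_t),\dots,f_d(S^d_t))$ does. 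There is essentially no obstacle here beyond keeping the two roles of strict monotonicity straight and checking that the sign relabelling is a genuine bijection, so that the two universally quantified families of inequalities literally coincide.
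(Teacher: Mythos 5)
Your argument is correct: the two key observations (injectivity of $f_i$ makes the index set $I^B$ invariant, and strict monotonicity turns each $\widetilde A_i^{\alpha}$ into $A_i^{\varepsilon_i\alpha}$, with the relabelling $(\alpha_i)\mapsto(\varepsilon_i\alpha_i)$ a bijection of $\{+,-\}^d$) are exactly what reduces the claim to a tautology, and you rightly note that the CUD condition only sees the process through $Y^i_{\tau_1},Y^i_{\tau_2}$, so the possible failure of $f_i(S^i)$ to be c\`adl\`ag is harmless. The paper itself gives no proof of this statement, citing it as Proposition 2 of Sayit (2013); your write-up is the standard argument for that result and is complete.
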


Finally, we make some comments on no arbitrage results for simple strategies.
\begin{rmk}
\citet{Bender2012} have given a necessary and sufficient condition for the market $S$ being free of arbitrage in the class $\mathcal{S}(\mathbb{F})$ of all simple strategies when $S$ is continuous. Moreover, using \citet{Bender2012}'s result, \citet{Peyre2017} has shown that there is no arbitrage in the class $\mathcal{S}(\mathbb{F})$ for the fractional Brownian motion market.
%\noindent(ii) 
\end{rmk}

\if0
Next we review a characterization for the market being free of arbitrage with simple strategies, which is developed in \citet{Bender2012}. 
\begin{definition}
We say that $X$ satisfies \textit{no obvious arbitrage} (NOA) (with respect to $\mathbb{F}$) if for every stopping time $\sigma:\Omega\to[0,T]$ such that $P(\sigma<T)>0$, we have
\[
P\left(\{\sigma<T\}\cap\left\{\sup_{t\in[\sigma,T]}H^\top(X_t-X_\sigma)<\varepsilon\right\}\right)>0
\]
for any $\varepsilon>0$ and any $d$-dimensional $\mathcal{F}_\sigma$-measurable variable $H$ such that $\|H\|=1$ a.s.
\end{definition}
\begin{definition}
We say that $X$ satisfies \textit{two-way crossing} (TWC) (with respect to $\mathbb{F}$) if for every stopping time $\sigma:\Omega\to[0,T]$ such that $P(\sigma<T)=1$ and every $d$-dimensional $\mathcal{F}_\sigma$-measurable variable $H$ such that $\|H\|=1$ a.s., we have $\sigma_H=\sigma_{-H}$ a.s. Here, $\sigma_H$ is defined by
\[
\sigma_H=\inf\{t\in[\sigma,T]:H^\top(X_t-X_\sigma)>0\},
\]
and $\sigma_{-H}$ is defined analogously. 
\end{definition}

\begin{theorem}[\cite{Bender2012}, Theorem 3.4]\label{thm:bender}
Suppose that $S$ is continuous. Then, $S$ is free of arbitrage with simple strategies if and only if $S$ satisfies NOA and TWC.
\end{theorem}

\begin{rmk}
\citet{Peyre2017} has shown that fractional Brownian motion satisfies TWC. Since fractional Brownian motion satisfies NOA because it is sticky (see Definition \ref{def:sticky} and Remark \ref{rmk:sticky}), Theorem \ref{thm:bender} implies that fractional Brownian motion market is free of arbitrage with simple strategies.
\end{rmk}

To conclude this subsection, we discuss about sufficient conditions for NOA and TWC. Since sufficient conditions for NOA are given in the next two subsections, here we focus on TWC. The following sufficient condition for NOA is given in \cite{Bender2012}:    
\begin{theorem}[\cite{Bender2012}, Theorem 3.5]
Let $M_t=(M^1_t,\dots,M^d_t)$ $(t\in[0,T])$ be a $d$-dimensional continuous $\mathbb{F}$-local martingale and $Y_t=(Y^1_t,\dots,Y^d_t)$ $(t\in[0,T])$ be a $d$-dimensional $\mathbb{F}$-adapted process. Suppose that the following conditions are satisfied:
\begin{enumerate}[label=(\roman*)]

\item There is a positive random variable $\eta$ such that the matrix $\langle M\rangle_t-\langle M\rangle_s-\eta(t-s)E_d$ is a.s.~non-negative definite for all $0\leq s<t\leq T$, where $\langle M\rangle_t=(\langle M^i,M^j\rangle_t)_{1\leq i,j\leq d}$ and $E_d$ is the $d\times d$ identity matrix.

\item The paths of $Y$ are a.s.~$1/2$-H\"older continuous. That is, there is a positive random variable $C$ such that $\|Y_t-Y_s\|\leq C|t-s|^{1/2}$ a.s.~for all $0\leq s<t\leq T$.

\end{enumerate} 
Then, the process $(M_t+Y_t)_{t\in[0,T]}$ satisfies TWC. 
\end{theorem}
\fi

\subsection{Trading under transaction costs}

Next we review some no arbitrage results when we take account of transaction costs in trading. Here, we focus on constant proportional transaction costs and adopt the setting of \citet{Guasoni2002} (see also \cite{Guasoni2006,SV2011}). We refer to \cite{KS2009} for a comprehensive treatment of this topic. 

Throughout this subsection, we assume that $S$ is quasi-left continuous (cf.~Chapter I, Definition 2.25 of \cite{JS}). 
%That is, $S_\tau-S_{\tau-}=0$ a.s.~on the set $\{\tau<\infty\}$ for every predictable time $\tau$. 
Let $\varepsilon>0$. For a $d$-dimensional left-continuous $\mathbb{F}$-adapted process $\Phi=(\Phi_t)_{t\in[0,T]}$ with finite variation, we define the \textit{value process of $\Phi$ with $\varepsilon$-transaction costs} (with zero initial capital) by
\[
V_t^\varepsilon(\Phi)=\sum_{i=1}^d\int_0^t\Phi^i_sdS^i_s-\sum_{i=1}^d\left(\varepsilon\int_0^tS^i_sd\mathrm{TV}(\Phi^i)_s+\varepsilon|\Phi^i_t|S^i_t\right),\qquad t\in[0,T].
\]
Here, for each $i=1,\dots,d$, $\mathrm{TV}(\Phi^i)=(\mathrm{TV}(\Phi^i)_t)_{t\in[0,T]}$ is the total variation process of $\Phi^i$, and the integral $\int_0^t\Phi^i_sdS^i_s$ is defined as in Definition 2.2 of \cite{Guasoni2002}. 
\begin{definition}
Let $\varepsilon>0$ and let $\Phi$ be a $d$-dimensional left-continuous $\mathbb{F}$-adapted process with finite variation. 
\begin{enumerate}[label=(\alph*)]

\item $\Phi$ is called an \textit{admissible strategy with $\varepsilon$-transaction costs} if there is a constant $M>0$ such that $V^\varepsilon_t(\Phi)\geq-M$ a.s.~for all $t\in[0,T]$. We write $\mathcal{A}^\varepsilon$ the class of all admissible strategies with $\varepsilon$-transaction costs.

\item $\Phi$ is called an \textit{arbitrage with $\varepsilon$-transaction costs} if 
%$\Phi$ is an admissible strategy with $\varepsilon$-transaction costs and satisfies 
$P(V^\varepsilon_T(\Phi)\geq0)=1$ and $P(V^\varepsilon_T(\Phi)>0)>0$.

\end{enumerate}

\end{definition}
\begin{rmk}
The left continuity of admissible strategies in the above definition can be relaxed to the predictability because for any $\mathbb{F}$-predictable process $\Phi=(\Phi_t)_{t\in[0,T]}$ with finite variation it holds that $V^\varepsilon_t(\Phi)=V^\varepsilon_t(\Phi_-)$ a.s.~for all $t\in[0,T]$ and $\varepsilon>0$ by Proposition 2.5 of \cite{Guasoni2002}, where $\Phi_-=(\Phi_{t-})_{t\in[0,T]}$.
\end{rmk}
Now we present a sufficient condition for $S$ having no arbitrage with $\varepsilon$-transaction costs in the class $\mathcal{A}^\varepsilon$ for any $\varepsilon>0$, which is called the \textit{stickiness}.
\begin{definition}\label{def:sticky}
A $d$-dimensional c\`adl\`ag $\mathbb{F}$-adapted process $X=(X_t)_{t\in[0,T]}$ is said to be \textit{sticky} (with respect to $\mathbb{F}$) if for any $t\in[0,T)$ and $\delta>0$,
\begin{equation*}%\label{eq:sticky}
P\left(\bigcap_{i=1}^d\left\{\sup_{u\in[t,T]}|X_u^i-X^i_t|<\delta\right\}|\mathcal{F}_t\right)>0\qquad\text{a.s.}
\end{equation*}
\end{definition}
The above definition of the stickiness is due to Definition 2.2 of \citet{BPS2015}. However, as is pointed out in Remark 2.1 of \cite{BPS2015}, this definition turns out to be equivalent to the notion of \textit{joint stickiness} in \citet{SV2011}. More precisely, we have the following result:
\begin{lemma}[\cite{BPS2015}, Lemma 3.1]\label{lemma:sticky}
If a $d$-dimensional c\`adl\`ag $\mathbb{F}$-adapted process $X=(X_t)_{t\in[0,T]}$ is sticky, for any $\mathbb{F}$-stopping time $\tau:\Omega\to[0,T]$ and any $\mathcal{F}_\tau$-measurable non-negative random variable $\eta$, we have
\begin{equation*}%\label{eq:sticky}
P\left(\bigcap_{i=1}^d\left\{\sup_{u\in[\tau,T]}|X_u^i-X^i_t|<\eta\right\}|\mathcal{F}_\tau\right)>0\qquad\text{a.s.~on }\{\eta>0\}.
\end{equation*}
\end{lemma}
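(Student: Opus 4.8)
The plan is to upgrade the defining inequality of stickiness in two stages: first I would let the deterministic constant $\delta$ become an $\mathcal F_t$-measurable threshold while keeping the time $t$ deterministic, and then I would pass from a deterministic $t$ to an arbitrary stopping time $\tau$ (replacing $X^i_t$ by $X^i_\tau$ in the statement, which is clearly the intended reading).

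For the first stage, fix $t\in[0,T]$ and an $\mathcal F_t$-measurable $\eta\ge0$, and write $A=\bigcap_{i=1}^d\{\sup_{u\in[t,T]}|X^i_u-X^i_t|<\eta\}$. On the $\mathcal F_t$-set $B_k:=\{\eta>1/k\}$ one has $\bigcap_{i}\{\sup_{u\in[t,T]}|X^i_u-X^i_t|<1/k\}\subseteq A$, so $1_{B_k}P(A\mid\mathcal F_t)\ge 1_{B_k}P(\bigcap_i\{\sup_{u\in[t,T]}|X^i_u-X^i_t|<1/k\}\mid\mathcal F_t)$, and the right-hand side is a.s.\ positive by stickiness with $\delta=1/k$. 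Since $\{\eta>0\}=\bigcup_kB_k$, this yields the conclusion at the deterministic time $t$ with the $\mathcal F_t$-measurable threshold $\eta$.

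For the second stage I would approximate $\tau$ from above by the dyadic stopping times $\tau_n=2^{-n}((\lfloor2^n\tau\rfloor+1)\wedge2^nT)$, which take finitely many values in a set $D_n$ and satisfy $\tau_n\downarrow\tau$; the case $\{\tau=T\}$ is trivial and can be set aside. Writing $A=\bigcap_i\{\sup_{u\in[\tau,T]}|X^i_u-X^i_\tau|<\eta\}$, $E_n=\bigcap_i\{\sup_{u\in[\tau_n,T]}|X^i_u-X^i_{\tau_n}|<\eta/2\}$ and $\Delta_n=\max_i(\sup_{u\in[\tau,\tau_n]}|X^i_u-X^i_\tau|+|X^i_{\tau_n}-X^i_\tau|)$, a triangle inequality gives $\{\Delta_n<\eta/2\}\cap E_n\subseteq A$. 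Since $\tau\le\tau_n$, both $\Delta_n$ and $\{\Delta_n<\eta/2\}$ are $\mathcal F_{\tau_n}$-measurable, so conditioning on $\mathcal F_{\tau_n}$ and using the tower property along $\mathcal F_\tau\subseteq\mathcal F_{\tau_n}$ gives $P(A\mid\mathcal F_\tau)\ge E[1_{\{\Delta_n<\eta/2\}}P(E_n\mid\mathcal F_{\tau_n})\mid\mathcal F_\tau]$. Decomposing over $\{\tau_n=s\}$, on which $\eta 1_{\{\tau_n=s\}}$ is $\mathcal F_s$-measurable because $\{\tau_n=s\}\subseteq\{\tau<s\}$ for $s<T$ and $\mathbb F$ is right-continuous, the first stage applied at the deterministic times $s\in D_n$ shows $P(E_n\mid\mathcal F_{\tau_n})>0$ a.s.\ on $\{\eta>0\}$. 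Hence $P(A\mid\mathcal F_\tau)$ dominates $Z_n:=E[1_{\{\Delta_n<\eta/2\}\cap\{\eta>0\}}P(E_n\mid\mathcal F_{\tau_n})\mid\mathcal F_\tau]\ge0$, which is strictly positive wherever $P(\{\Delta_n<\eta/2\}\cap\{\eta>0\}\mid\mathcal F_\tau)>0$. Finally, the c\`adl\`ag property of $X$ together with $\tau_n\downarrow\tau$ forces $\Delta_n\to0$ a.s., so conditional dominated convergence gives $P(\{\Delta_n<\eta/2\}\cap\{\eta>0\}\mid\mathcal F_\tau)\to1_{\{\eta>0\}}$ a.s.; therefore for a.e.\ $\omega\in\{\eta>0\}$ some $Z_n(\omega)$ is positive, which yields $P(A\mid\mathcal F_\tau)>0$ a.s.\ on $\{\eta>0\}$.

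I expect the first stage to be routine and the main obstacle to lie in the second: transferring the stickiness estimate from the discretized time $\tau_n$ back to $\tau$. The two slippery points are the measurability bookkeeping — that $\{\Delta_n<\eta/2\}\in\mathcal F_{\tau_n}$ and that the localized threshold $\eta 1_{\{\tau_n=s\}}$ lies in $\mathcal F_s$, which relies on right-continuity of $\mathbb F$ — and the concluding limit argument, needed to turn ``positive for some $n$'' into ``positive a.s.\ on all of $\{\eta>0\}$''. The c\`adl\`ag hypothesis on $X$ is used exactly at the step $\Delta_n\to0$, which is also why the dyadic approximation must be taken from above rather than from below.
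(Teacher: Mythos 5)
Your proof is correct. Note that there is no in-paper argument to compare it against: the lemma is quoted verbatim (typo $X^i_t$ for $X^i_\tau$ included, which you read correctly) from Lemma~3.1 of Bender, Pakkanen and Sayit (2015), and the present paper gives no proof. Your two-stage scheme --- first upgrading the deterministic threshold $\delta$ to an $\mathcal F_t$-measurable $\eta$ via the layers $\{\eta>1/k\}$, then discretizing $\tau$ dyadically from above, localizing on $\{\tau_n=s\}$, and absorbing the error $\Delta_n$ by right-continuity --- is essentially the standard argument from that reference, and every step checks out: the inclusion $\{\Delta_n<\eta/2\}\cap E_n\subseteq A$, the tower-property bound $P(A\mid\mathcal F_\tau)\ge E[1_{\{\Delta_n<\eta/2\}}P(E_n\mid\mathcal F_{\tau_n})\mid\mathcal F_\tau]$, the positivity of $E[Y\mid\mathcal F_\tau]$ on $\{P(Y>0\mid\mathcal F_\tau)>0\}$, and the limit $P(\{\Delta_n<\eta/2\}\cap\{\eta>0\}\mid\mathcal F_\tau)\to 1_{\{\eta>0\}}$. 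The one point I would spell out, since you flag it only as ``bookkeeping,'' is the claim $\Delta_n\in\mathcal F_{\tau_n}$: because $X$ is c\`adl\`ag one can write
\[
\sup_{u\in[\tau,\tau_n]}|X^i_u-X^i_\tau|=\sup_{q\in\mathbb{Q},\,q\ge0}\bigl|X^i_{(\tau+q)\wedge\tau_n}-X^i_\tau\bigr|,
\]
a countable supremum of random variables each measurable with respect to $\mathcal F_{(\tau+q)\wedge\tau_n}\subset\mathcal F_{\tau_n}$; this is where the c\`adl\`ag hypothesis enters a second time, in addition to the step $\Delta_n\to0$.
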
 
\begin{rmk}\label{rmk:sticky}
The stickiness was originally introduced in Definition 2.2 of \citet{Guasoni2006} for the univariate case. It has also been shown that fractional Brownian motion is sticky (Proposition 5.1 of \cite{Guasoni2006}).
\end{rmk}
Lemma \ref{lemma:sticky} and Proposition 2 of \cite{SV2011} imply that the stickiness is invariant under continuous transform:
\begin{proposition}\label{prop:sticky-trans}
Let $X=(X_t)_{t\in[0,T]}$ be a $d$-dimensional c\`adl\`ag $\mathbb{F}$-adapted process. Also, let $f:\mathbb{R}^d\to\mathbb{R}^d$ be a continuous function and define the $d$-dimensional process $Y=(Y_t)_{t\in[0,T]}$ by $Y_t=f(X^1_t,\dots,X^d_t)$, $t\in[0,T]$. If $X$ is sticky, then $Y$ is sticky as well.
\end{proposition}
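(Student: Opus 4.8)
The plan is to reduce the stickiness of $Y$ to the random-radius form of stickiness for $X$ supplied by Lemma~\ref{lemma:sticky}, exploiting that a continuous map is locally uniformly continuous. First observe that $Y=f(X)$ is automatically c\`adl\`ag and $\mathbb{F}$-adapted, being the composition of the continuous (hence Borel) map $f$ with the c\`adl\`ag adapted process $X$; so only the stickiness estimate needs proof. Fix $t\in[0,T)$ and $\delta>0$. The key point is that around each point $x\in\mathbb{R}^d$ there is a radius $r(x)>0$, measured in the $\ell^\infty$-sense, such that $f$ maps the closed $r(x)$-ball about $x$ into the $(\delta/2)$-ball about $f(x)$. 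If $x\mapsto r(x)$ can be chosen Borel, then $\eta:=r(X_t)$ is an $\mathcal{F}_t$-measurable, a.s.\ strictly positive random variable, and Lemma~\ref{lemma:sticky} applied with $\tau\equiv t$ produces a set of positive $\mathcal{F}_t$-conditional probability on which $X$ stays within $\eta$ of $X_t$, hence on which $Y$ stays within $\delta/2<\delta$ of $Y_t$.

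To make the selection precise I would, for $r\in(0,1]$, introduce the deterministic modulus
\[
h_r(x):=\sup\Bigl\{\max_{1\le i\le d}|f^i(z)-f^i(x)|:\ z\in\mathbb{R}^d,\ \max_{1\le i\le d}|z^i-x^i|\le r\Bigr\},
\]
which is nondecreasing in $r$ and, by continuity of $f$, is unchanged if the supremum is taken over a countable dense subset of the ball, so each $h_r$ is Borel. Then I would set $g(x):=\sup\{r\in(0,1]:h_r(x)\le\delta/2\}$. Continuity of $f$ at $x$ forces the defining set to be a nonempty subinterval of $(0,1]$ with right endpoint $g(x)$, so $g>0$ everywhere; and the identity $g=\sup_{r\in\mathbb{Q}\cap(0,1]}r\,\mathbf 1_{\{h_r\le\delta/2\}}$, valid by monotonicity of $r\mapsto h_r$, exhibits $g$ as a countable supremum of Borel functions, hence Borel. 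Thus $\eta:=g(X_t)$ is $\mathcal{F}_t$-measurable with values in $(0,1]$, in particular $\eta>0$ everywhere.

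Applying Lemma~\ref{lemma:sticky} with this $\eta$ and $\tau\equiv t$ (so that $\{\eta>0\}=\Omega$), the event $E:=\bigcap_{i=1}^d\{\sup_{u\in[t,T]}|X^i_u-X^i_t|<\eta\}$ satisfies $P(E\mid\mathcal{F}_t)>0$ a.s. On $E$, for every $u\in[t,T]$ we have $\max_i|X^i_u-X^i_t|<g(X_t)$, so this quantity is not an upper bound for $\{r:h_r(X_t)\le\delta/2\}$; hence there is an $r$ in that set with $\max_i|X^i_u-X^i_t|\le r$ and $h_r(X_t)\le\delta/2$, which gives $\max_i|Y^i_u-Y^i_t|=\max_i|f^i(X_u)-f^i(X_t)|\le h_r(X_t)\le\delta/2$. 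Therefore $E\subseteq\bigcap_{i=1}^d\{\sup_{u\in[t,T]}|Y^i_u-Y^i_t|<\delta\}$, and monotonicity of conditional probability yields $P\bigl(\bigcap_{i=1}^d\{\sup_{u\in[t,T]}|Y^i_u-Y^i_t|<\delta\}\mid\mathcal{F}_t\bigr)\ge P(E\mid\mathcal{F}_t)>0$ a.s. Since $t\in[0,T)$ and $\delta>0$ were arbitrary, $Y$ is sticky. (Alternatively, once the random-radius reformulation is in hand one may simply invoke Proposition~2 of \citet{SV2011}.)

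The only genuinely technical ingredient is the measurable choice of the continuity radius $x\mapsto g(x)$, i.e.\ the Borel measurability of the moduli $h_r$ and of their ``inverse'' $g$, together with careful bookkeeping of strict versus non-strict inequalities when passing from $\{X\text{ near }X_t\}$ to $\{Y\text{ near }Y_t\}$; I expect this to be the main, though modest, obstacle, everything else being a direct application of Lemma~\ref{lemma:sticky}.
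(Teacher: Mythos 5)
Your proof is correct and follows the same route the paper takes: the paper simply cites Lemma~\ref{lemma:sticky} together with Proposition~2 of Sayit and Viens (2011), and your argument is precisely a self-contained version of that combination --- a measurable local continuity radius $g(X_t)$ fed as the random $\eta$ into the random-radius form of stickiness. The measurability construction via $h_r$ and the strict/non-strict inequality bookkeeping are handled correctly.
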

Noting the above results, we obtain the following result from Proposition 1 of \citet{SV2011}:
\begin{proposition}
Suppose that $S^i_t>0$ and $S^i_{t-}>0$ for every $i=1,\dots,d$ and every $t\in[0,T]$. If $S$ is sticky, then $S$ has no arbitrage with $\varepsilon$-transaction costs in the class $\mathcal{A}^\varepsilon$ for all $\varepsilon>0$. 
\end{proposition}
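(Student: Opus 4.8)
The plan is to deduce the statement directly from Proposition 1 of \citet{SV2011}, which asserts the absence of arbitrage with $\varepsilon$-transaction costs for a positive price process that is jointly sticky in the sense of \cite{SV2011}, for every $\varepsilon>0$. The whole task thus reduces to checking that our hypotheses imply theirs. Positivity is immediate from the assumption $S^i_t>0$ and $S^i_{t-}>0$, and $S$ is quasi-left continuous by the standing assumption of this subsection; the only point requiring an argument is to pass from ``$S$ is sticky in the sense of Definition \ref{def:sticky}'' to the joint stickiness used in \cite{SV2011}.

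To do this, I would first invoke Lemma \ref{lemma:sticky}: it upgrades the fixed-time conditional statement of Definition \ref{def:sticky} to the conditional statement at an arbitrary $\mathbb{F}$-stopping time $\tau$ with an $\mathcal{F}_\tau$-measurable threshold, and (by Remark 2.1 of \cite{BPS2015}) the latter is exactly the joint stickiness of \cite{SV2011}. Hence $S$ is jointly sticky. If \cite{SV2011}'s formulation is instead phrased for the log-price process — which is natural given the multiplicative transaction-cost structure — I would additionally apply Proposition \ref{prop:sticky-trans} with (a continuous $\mathbb{R}^d$-extension of) the coordinatewise logarithm: since $S$ takes values in $(0,\infty)^d$, the process $(\log S^1_t,\dots,\log S^d_t)_{t\in[0,T]}$ inherits stickiness, hence joint stickiness. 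Either way, all hypotheses of Proposition 1 of \cite{SV2011} are in force, and applying that proposition gives the desired conclusion.

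The main — and essentially the only — obstacle is bookkeeping: one has to verify that the notion of admissible strategy and of arbitrage with $\varepsilon$-transaction costs adopted here, built on the value process $V^\varepsilon_t(\Phi)$ with $\int_0^t\Phi^i_sdS^i_s$ understood as in \cite{Guasoni2002}, matches the one in \cite{SV2011}, and that their stickiness hypothesis coincides with the joint stickiness recovered above. No fresh probabilistic estimate is needed; the analytic content has already been absorbed into Lemma \ref{lemma:sticky} and Proposition \ref{prop:sticky-trans} (which themselves quote \cite{BPS2015,SV2011}), together with the no-arbitrage argument of \cite{SV2011}, which in turn extends the stickiness-based technique of \citet{Guasoni2006}.
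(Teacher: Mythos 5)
Your proposal matches the paper's treatment exactly: the paper gives no independent proof but states that the result follows from Proposition 1 of \cite{SV2011} ``noting the above results,'' namely Lemma \ref{lemma:sticky} (equivalence of the stickiness of Definition \ref{def:sticky} with the joint stickiness of \cite{SV2011}) and Proposition \ref{prop:sticky-trans} (invariance under continuous, e.g.\ logarithmic, transforms). Your reduction is precisely this, so the proposal is correct and takes essentially the same approach.
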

Next we review the concept of \textit{consistent price system} (CPS), which plays an important role when we consider models with proportional transaction costs. 
\begin{definition}
Let $\varepsilon>0$. A $d$-dimensional $\mathbb{F}$-adapted process $M=(M_t)_{t\in[0,T]}$ is called an \textit{$\varepsilon$-consistent price system} ($\varepsilon$-CPS) for $S$ if the following conditions hold true:
\begin{enumerate}[label=(\roman*)]

\item We have
\[
\frac{S^i_t}{1+\varepsilon}\leq M^i_t\leq(1+\varepsilon)S^i_t\qquad\text{a.s.}
\]
for any $i\in\{1,\dots,d\}$ and $t\in[0,T]$.

\item There is a probability measure $Q$ on $(\Omega,\mathcal{F})$ equivalent to $P$ such that $M$ is a $d$-dimensional $\mathbb{F}$-martingale under $Q$. 

\end{enumerate}
\end{definition}
As is discussed in \citet{GRS2008,GRS2010}, the existence of CPS provides a useful tool to solve arbitrage and superreplication problems. As an illustration, we show that the existence of CPS implies no arbitrage in the num\'eraire-free sense.
\begin{definition}
Let $\varepsilon>0$. A $d$-dimensional left-continuous $\mathbb{F}$-adapted process $\Phi$ with finite variation is called an \textit{admissible strategy with $\varepsilon$-transaction costs in the num\'eraire-free sense} if there is a constant $M>0$ such that $V^\varepsilon_t(\Phi)\geq-M(1+\sum_{i=1}^dS^i_t)$ a.s.~for all $t\in[0,T]$. We write $\tilde{\mathcal{A}}^\varepsilon$ the class of all admissible strategies with $\varepsilon$-transaction costs in the num\'eraire-free sense.
\end{definition}
See Remark 2.17 of \cite{GRS2008} for a discussion of this definition of admissible strategies. 
The following result follows from Lemma 2.1 of \cite{Guasoni2006} and Theorem 2.6 of \cite{Yan1998}:
\begin{proposition}\label{prop:arb-tc2}
Suppose that $S^i_t>0$ for every $i=1,\dots,d$ and every $t\in[0,T]$. If $S$ has an $\varepsilon$-CPS for some $\varepsilon>0$, then $S$ has no arbitrage with $\varepsilon$-transaction costs in the class $\tilde{\mathcal{A}}^\varepsilon$.
\end{proposition}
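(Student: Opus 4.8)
The plan is to use an $\varepsilon$-CPS as a dominating ``shadow price'': trading against it \emph{without} transaction costs is at least as favorable as trading against $S$ \emph{with} $\varepsilon$-transaction costs, which reduces the claim to the no-arbitrage criterion of \cite{Yan1998} for the resulting $Q$-supermartingale. Fix $\Phi\in\tilde{\mathcal{A}}^\varepsilon$ with $P(V^\varepsilon_T(\Phi)\geq0)=1$; it suffices to prove $V^\varepsilon_T(\Phi)=0$ a.s. Let $M$ be an $\varepsilon$-CPS for $S$ and let $Q\sim P$ be a probability measure under which $M$ is a $d$-dimensional martingale.

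First I would apply Lemma 2.1 of \cite{Guasoni2006} (together with the accompanying estimates of \cite{Guasoni2002}). Since $S^i_t/(1+\varepsilon)\leq M^i_t\leq(1+\varepsilon)S^i_t$ for all $i$ and $t$, the penalty terms $\varepsilon\int_0^tS^i_s\,d\mathrm{TV}(\Phi^i)_s$ and $\varepsilon|\Phi^i_t|S^i_t$ appearing in $V^\varepsilon_t(\Phi)$ are precisely what is needed to ``pay for'' substituting $M$ for $S$ in the gains-from-trade term. This yields a process $\widehat V=(\widehat V_t)_{t\in[0,T]}$ --- the frictionless value of running $\Phi$ against $M$ --- with $V^\varepsilon_t(\Phi)\leq\widehat V_t$ for every $t\in[0,T]$ and $\widehat V_0\leq0$; moreover $\widehat V$ is a stochastic integral of the locally bounded predictable process $\Phi$ against the $Q$-martingale $M$, hence a $Q$-local martingale (the positivity hypothesis $S^i>0$ is what makes this multiplicative comparison between $S$ and $M$ legitimate). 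Combining the num\'eraire-free admissibility bound $V^\varepsilon_t(\Phi)\geq-c_0(1+\sum_{i=1}^dS^i_t)$ for some constant $c_0>0$ with the CPS inequalities produces a constant $c>0$ such that $\widehat V_t\geq-c(1+\sum_{i=1}^dM^i_t)$, i.e.\ $\widehat V$ is bounded below by a constant multiple of the $Q$-martingale $1+\sum_{i=1}^dM^i$.

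Finally, a $Q$-local martingale that is bounded below by a $Q$-martingale is a $Q$-supermartingale; invoking this --- equivalently, the separation argument of Theorem 2.6 of \cite{Yan1998} applied to the cone of terminal values dominated by $\widehat V_T$ --- gives $E_Q[V^\varepsilon_T(\Phi)]\leq E_Q[\widehat V_T]\leq\widehat V_0\leq0$. Because $V^\varepsilon_T(\Phi)\geq0$ a.s.\ and $Q\sim P$, this forces $V^\varepsilon_T(\Phi)=0$ a.s., so no $\Phi\in\tilde{\mathcal{A}}^\varepsilon$ is an arbitrage. The step I expect to be the main obstacle is the first one: checking rigorously, and \emph{for all $t$ at once} rather than only at $t=T$, that passing from $(S,\text{with costs})$ to $(M,\text{no costs})$ can only raise the value process. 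This demands careful handling of the Stieltjes-type integral of Definition 2.2 of \cite{Guasoni2002}, of the jump terms of $S$ and $M$ (only quasi-left continuity is assumed), and of the terminal liquidation term $\varepsilon|\Phi^i_t|S^i_t$, as well as the localization needed to make the integral against $M$ a genuine $Q$-local martingale before its lower bound can be used to promote it to a supermartingale.
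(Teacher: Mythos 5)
Your argument is correct and is precisely the route the paper intends: the paper gives no written proof but simply attributes the result to Lemma 2.1 of \cite{Guasoni2006} (the shadow-price domination $V^\varepsilon_t(\Phi)\leq\int_0^t\Phi\,dM$ for any process $M$ within the bid--ask spread of $S$) combined with the supermartingale/separation argument of Theorem 2.6 of \cite{Yan1998} for integrands admissible in the num\'eraire-free sense. Your filling-in of the details --- the local-martingale property of $\int\Phi\,dM$ under $Q$, the transfer of the lower bound $-c_0(1+\sum_iS^i_t)$ to $-c(1+\sum_iM^i_t)$ via the CPS inequalities, and the promotion to a $Q$-supermartingale --- is exactly what those citations are meant to supply.
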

\begin{rmk}
With a slightly different definition of the value process $V^\varepsilon(\Phi)$ and the admissible class $\tilde{\mathcal{A}}^\varepsilon$, \citet{GRS2010} have shown that the converse of Proposition \ref{prop:arb-tc2} also holds true in the case of $d=1$.  
\end{rmk}
Finally, we remark that \citet{BPS2015} have shown that the stickiness is sufficient for the existence of CPS as long as $S$ is continuous:
\begin{proposition}[\cite{BPS2015}, Theorem 2.1]\label{prop:cps}
Suppose that $S$ is continuous. If $S$ is sticky, then $S$ has an $\varepsilon$-CPS for any $\varepsilon>0$.
\end{proposition}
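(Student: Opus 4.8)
The plan is to reduce the continuous-time claim to a discrete-time one by sampling $S$ along suitable stopping times, to solve the discrete problem with the fundamental theorem of asset pricing under transaction costs, and then to lift the resulting consistent price system back to continuous time. Throughout we take $S$ to be $(0,\infty)^d$-valued, as is implicit in the notion of a CPS; moreover, after stopping $S$ at the first time some component leaves a fixed compact subinterval of $(0,\infty)$ we may and do assume that $S$ is uniformly bounded and bounded away from $0$, a routine localisation argument (which we suppress) removing this extra assumption at the end.

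Fix $\varepsilon>0$ and choose $\varepsilon'\in(0,\varepsilon)$ and $c\in(0,1)$ small enough that $(1+c)(1+\varepsilon')\le 1+\varepsilon$ and $1+\varepsilon'\le(1-c)(1+\varepsilon)$. Put $\tau_0=0$ and
\[
\tau_{n+1}=\inf\Big\{t\ge\tau_n:\max_{1\le i\le d}\big|S^i_t/S^i_{\tau_n}-1\big|\ge c\Big\}\wedge T ,\qquad n\ge 0 .
\]
Because every path of $S$ is uniformly continuous on $[0,T]$ and bounded away from $0$, the gaps $\tau_{n+1}-\tau_n$ are bounded below on $\{\tau_n<T\}$ by a path-dependent positive constant, so for each $\omega$ only finitely many of the $\tau_n$ are $<T$; hence $\tilde S_n:=S_{\tau_n}$ defines an $(\mathcal F_{\tau_n})_{n\ge0}$-adapted discrete-time process that is eventually equal to $S_T$. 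Since $S^i_t/S^i_{\tau_n}\in(1-c,1+c)$ for $t\in[\tau_n,\tau_{n+1})$, any $Q$-martingale lying in the band $\big[\tilde S^i_n/(1+\varepsilon'),(1+\varepsilon')\tilde S^i_n\big]$, interpolated so as to be constant on each $[\tau_n,\tau_{n+1})$ with terminal value its own limit, lies in $\big[S^i_t/(1+\varepsilon),(1+\varepsilon)S^i_t\big]$ for every $i$ and $t$. It therefore suffices to produce, for the discrete market $\tilde S$, a probability $Q\sim P$ and a $Q$-martingale $\tilde M$ with $\tilde S^i_n/(1+\varepsilon')\le\tilde M^i_n\le(1+\varepsilon')\tilde S^i_n$ for all $i$ and $n$.

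This is the only point at which stickiness is used, and there only lightly. By the argument of \citet{Guasoni2006} (see also \citet{SV2011}), stickiness of $S$ implies that the discrete market $\tilde S$ is free of arbitrage under $\delta$-transaction costs for every $\delta>0$: along the finitely many trading dates of any given strategy, Lemma \ref{lemma:sticky} makes available, with positive conditional probability, a scenario in which $S$ barely moves, and on that scenario the accumulated proportional transaction costs strictly dominate the gains, which is incompatible with a non-trivial arbitrage. In particular $\tilde S$ satisfies robust no arbitrage at cost level $\varepsilon'$, so the fundamental theorem of asset pricing under proportional transaction costs in finite discrete time (\citet{Schachermayer2004}), applied on each horizon $\{0,\dots,K\}$, furnishes $Q_K\sim P$ and a $Q_K$-martingale $(\tilde M^K_n)_{n\le K}$ with $\tilde S^i_n/(1+\varepsilon')<\tilde M^{K,i}_n<(1+\varepsilon')\tilde S^i_n$. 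Since $\tilde S$ is eventually constant and, by the localisation, bounded, a truncation-and-limit argument (extracting $P$-a.s.\ convergent convex combinations of the densities $dQ_K/dP$ and of the martingales $\tilde M^K$ in the spirit of Koml\'os's theorem, and checking that the limiting density remains strictly positive and the limit a genuine martingale) yields a single pair $(Q,\tilde M)$ with the desired properties.

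Finally, put $M_t:=\tilde M_n$ for $t\in[\tau_n,\tau_{n+1})$ and $M_T:=\lim_n\tilde M_n$. Then $M$ is c\`adl\`ag and $(\mathcal F_t)$-adapted, it is a $Q$-martingale on $[0,T]$ because the $\tau_n$ are stopping times and $\tilde M$ a discrete $Q$-martingale (integrability coming from the bounds and the boundedness of $S$), and by the choice of $c$ and $\varepsilon'$ one has $S^i_t/(1+\varepsilon)\le M^i_t\le(1+\varepsilon)S^i_t$ for all $i$ and $t$ (immediate at $t=\tau_n$ and $t=T$, and on $(\tau_n,\tau_{n+1})$ a consequence of $S^i_t/S^i_{\tau_n}\in(1-c,1+c)$). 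Hence $(M,Q)$ is an $\varepsilon$-CPS, and undoing the localisation finishes the proof. The part I expect to cost the most work is the discrete-time existence together with its passage to the limit---reconciling \citet{Schachermayer2004}'s notion of a consistent price system with the band needed here, and carrying out the $K\to\infty$ limit so that both $Q\sim P$ and the martingale property of $M$ survive---along with the routine but fiddly removal of the boundedness localisation.
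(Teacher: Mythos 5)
The paper offers no proof of this proposition: it is imported verbatim from \citet{BPS2015}, so your attempt has to be judged against their argument. Your overall architecture --- stop $S$ along the relative-move times $\tau_n$, produce a consistent price system for the discrete skeleton $\tilde S_n=S_{\tau_n}$, and interpolate back --- is indeed the shape of the known proofs (\citet{GRS2008}, \citet{BPS2015}), and your band arithmetic relating $c$, $\varepsilon'$ and $\varepsilon$ is correct. The problem is that the two load-bearing steps are left as assertions, and neither works as stated.

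First, the passage from the finite-horizon objects $(Q_K,\tilde M^K)$ supplied by Schachermayer's discrete FTAP to a single pair $(Q,\tilde M)$ is not a routine ``truncation-and-limit argument''. The $\tilde M^K$ are martingales under \emph{different} measures $Q_K$, so almost surely convergent convex combinations of the densities $dQ_K/dP$ and of the processes $\tilde M^K$ need not yield a martingale under the limiting measure; moreover the Koml\'os-type limit of the densities can lose mass (Fatou only gives $E[\lim]\le 1$) and can vanish on a set of positive probability, destroying both normalisation and the equivalence $Q\sim P$. This compactness step is exactly where the difficulty of the theorem lives. \citet{BPS2015} avoid it entirely by constructing $Q$ and the discrete martingale \emph{inductively}, one stopping time at a time: stickiness (via Lemma \ref{lemma:sticky}) guarantees that, conditionally on $\mathcal{F}_{\tau_n}$ and $\{\tau_n<T\}$, each of the finitely many ``exit up / exit down / stay until $T$'' scenarios has positive probability, so one can prescribe the conditional law of the next step to make the skeleton a martingale while keeping explicit two-sided bounds on the conditional density increments; these bounds force the infinite product of densities to converge to a strictly positive, integrable limit. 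Second, your continuous-time interpolation $M_t:=\tilde M_{N_t}$ (constant on $[\tau_n,\tau_{n+1})$) is \emph{not} a $Q$-martingale: for $\tau_n<t<\tau_{n+1}$ the $\sigma$-field $\mathcal{F}_t$ strictly contains $\mathcal{F}_{\tau_n}$ --- it sees the path of $S$ on $(\tau_n,t]$, which is informative about which barrier will be hit first --- so $E_Q[\tilde M_{n+1}\mid\mathcal{F}_t]\neq\tilde M_n$ in general. The correct definition is $M_t:=E_Q[\tilde M_\infty\mid\mathcal{F}_t]$, the c\`adl\`ag closed martingale, after which the band condition at intermediate times must be re-derived from the fact that $\tilde M_{n+1}$ lies in an $\mathcal{F}_{\tau_n}$-measurable interval determined by $S_{\tau_n}$. (Your opening localisation is also not innocuous --- undoing it raises the same pasting-of-measures issue as the first gap --- but it is in fact unnecessary: positivity and continuity of $S$ on the compact interval $[0,T]$ already make the number of $\tau_n<T$ almost surely finite.)
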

\if0
\begin{rmk}
As is discussed in \citet{Rasonyi2009} and \citet{DK2012}, another kind of arbitrage opportunities different from the one considered above, which is known as \textit{arbitrage of the second kind}, is also important for models with proportional transaction costs. It would be worth mentioning that the stickiness rules out arbitrage of the second kind from models with transaction costs dependent on trading speeds in a \textit{superlinear} way; see Section 6 of \citet{RS2017} for details. 
\end{rmk}
\fi

\subsection{Conditional full support property}

In the previous subsections we see that the joint $\mathbb{F}$-CUD condition (resp.~the stickiness) is sufficient for a market having no arbitrage in the Cheridito class (resp.~under proportional transaction costs). In this subsection we present a convenient property which implies both the joint $\mathbb{F}$-CUD condition and the stickiness.   

Throughout this subsection, we do not require that the filtered probability space $(\Omega,\mathcal{F},\mathbb{F},P)$ satisfies the usual hypotheses. For $-\infty<a< b<\infty$, we denote by $C([a,b],\mathbb{R}^d)$ the space of all continuous functions from $[a,b]$ to $\mathbb{R}^d$, equipped with the uniform topology. Also, we set $C_x([a,b],\mathbb{R}^d)=\{f\in C([a,b],\mathbb{R}^d):f(a)=x\}$ for $x\in\mathbb{R}^d$.

Let us recall the notion of \textit{support} of a probability measure defined on a metric space:
\begin{definition}
Let $\Xi$ be a separable metric space. For a probability measure $\mu$ on $(\Xi,\mathcal{B}(\Xi))$, the \textit{support} of $\mu$ is defined as the smallest closed set $C$ of $\Xi$ such that $\mu(C)=1$ (such a set $C$ always exists by Chapter II, Theorem 2.1 of \cite{Parth1967}). We denote by $\supp\mu$ the support of $\mu$. 
\end{definition}
Now we introduce the concept of \textit{conditional full support property}:
\begin{definition}
A $d$-dimensional continuous $\mathbb{F}$-adapted process $X=(X_t)_{t\in[0,T]}$ is said to have \textit{conditional full support} (CFS) with respect to $\mathbb{F}$ if
\[
\supp\mathcal{L}_P((X_t)_{t\in[t_0,T]}|\mathcal{F}_{t_0})=C_{X_{t_0}}([t_0,T],\mathbb{R}^d)\qquad\text{a.s.}
\]
for any $t_0\in[0,T)$, where $\mathcal{L}_P((X_t)_{t\in[t_0,T]}|\mathcal{F}_{t_0})$ denotes the regular conditional law of $(X_t)_{t\in[t_0,T]}$ on $C([t_0,T],\mathbb{R}^d)$ under $P$, given $\mathcal{F}_{t_0}$. 
\end{definition}
We list some processes having CFS (under some reasonable assumptions) in Table \ref{table:cfs}. 
As remarked in \cite{PSY2017} (see Remark 2.4(ii) of \cite{PSY2017}), the CFS property is equivalent to the so-called \textit{conditional small ball property}:
\begin{lemma}\label{lemma:csbp}
For a $d$-dimensional continuous $\mathbb{F}$-adapted process $X=(X_t)_{t\in[0,T]}$, the following two conditions are equivalent: 
\begin{enumerate}[label=(\roman*)]

\item $X$ has CFS with respect to $\mathbb{F}$.

\item For any $t_0\in[0,T)$, $f\in C_0([t_0,T],\mathbb{R}^d)$ and $\varepsilon>0$,
\[
P\left(\sup_{t\in[t_0,T]}\|X_t-X_{t_0}-f(t)\|<\varepsilon|\mathcal{F}_{t_0}\right)>0\qquad\text{a.s.}
\]

\end{enumerate}
\end{lemma}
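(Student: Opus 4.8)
The plan is to reduce both implications to the standard description of the support of a Borel measure on a separable metric space: a point $x$ lies in $\supp\mu$ if and only if $\mu(B(x,r))>0$ for every $r>0$, where $B(x,r)$ is the open ball about $x$; equivalently, $\supp\mu$ equals the whole space iff $\mu$ charges every nonempty open set. Since $C([t_0,T],\mathbb{R}^d)$ is Polish, the regular conditional law $\mathcal{L}_P((X_t)_{t\in[t_0,T]}|\mathcal{F}_{t_0})$ exists, and two observations will be used throughout. First, because $\{g:g(t_0)=X_{t_0}\}$ has conditional probability one, $\supp\mathcal{L}_P((X_t)_{t\in[t_0,T]}|\mathcal{F}_{t_0})\subseteq C_{X_{t_0}}([t_0,T],\mathbb{R}^d)$ a.s. Second, since $X_{t_0}$ is $\mathcal{F}_{t_0}$-measurable one may \emph{freeze} it: for fixed $f\in C_0([t_0,T],\mathbb{R}^d)$ and $\varepsilon>0$, the conditional probability $P(\sup_{t\in[t_0,T]}\|X_t-X_{t_0}-f(t)\|<\varepsilon\mid\mathcal{F}_{t_0})(\omega)$ coincides, for $P$-a.e.\ $\omega$, with the mass that $\mathcal{L}_P((X_t)_{t\in[t_0,T]}|\mathcal{F}_{t_0})(\omega)$ assigns to the open sup-norm ball of radius $\varepsilon$ centered at the path $t\mapsto X_{t_0}(\omega)+f(t)$, which belongs to $C_{X_{t_0}(\omega)}([t_0,T],\mathbb{R}^d)$.

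For (i)$\Rightarrow$(ii), fix $t_0$, $f\in C_0([t_0,T],\mathbb{R}^d)$ and $\varepsilon>0$. On the almost sure event where $\supp\mathcal{L}_P((X_t)_{t\in[t_0,T]}|\mathcal{F}_{t_0})=C_{X_{t_0}}([t_0,T],\mathbb{R}^d)$, the path $t\mapsto X_{t_0}+f(t)$ lies in the support, so its $\varepsilon$-ball has strictly positive conditional-law mass by the ball characterization; by the freezing identity this mass equals the conditional probability in (ii), which is therefore almost surely positive.

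For (ii)$\Rightarrow$(i), the only genuine task is to exchange "almost surely" with the quantifier over $f$. The subspace $C_0([t_0,T],\mathbb{R}^d)$ is separable; fix a countable dense subset $\{f_k\}_{k\in\mathbb{N}}$. For each $k,m\in\mathbb{N}$, condition (ii) together with the freezing identity supplies a $P$-null set outside of which $\mathcal{L}_P((X_t)_{t\in[t_0,T]}|\mathcal{F}_{t_0})(\omega)$ assigns positive mass to $B(X_{t_0}(\omega)+f_k,1/m)$; let $N$ be the union of these countably many null sets. Take $\omega\notin N$, $g\in C_{X_{t_0}(\omega)}([t_0,T],\mathbb{R}^d)$, and $r>0$; setting $f:=g-X_{t_0}(\omega)\in C_0([t_0,T],\mathbb{R}^d)$ and choosing $m$ with $2/m<r$ and $k$ with $\|f_k-f\|_\infty<1/m$, one has $B(X_{t_0}(\omega)+f_k,1/m)\subseteq B(g,r)$, so the conditional law at $\omega$ charges $B(g,r)$. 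Hence $g\in\supp\mathcal{L}_P((X_t)_{t\in[t_0,T]}|\mathcal{F}_{t_0})(\omega)$; as $g$ and $r$ were arbitrary, $C_{X_{t_0}(\omega)}([t_0,T],\mathbb{R}^d)\subseteq\supp\mathcal{L}_P((X_t)_{t\in[t_0,T]}|\mathcal{F}_{t_0})(\omega)$, and with the reverse inclusion noted above, equality holds off $N$, i.e.\ CFS holds.

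The step I expect to require the most care is technical rather than conceptual: justifying the freezing identity for the regular conditional law (that conditioning on $\mathcal{F}_{t_0}$ genuinely reduces the $\mathcal{F}_{t_0}$-measurable initial value to a frozen parameter, with the resulting function of $\omega$ remaining measurable), together with the bookkeeping of the almost-sure qualifiers against the uncountable family of centers $f$ in (ii)$\Rightarrow$(i). Once these are settled, each direction follows immediately from the ball characterization of the support.
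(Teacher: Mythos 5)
Your argument is correct: the paper itself gives no proof of this lemma (it is quoted from Remark 2.4(ii) of the cited reference), and what you write is exactly the standard derivation one would supply there --- the open-ball characterization of the support, the ``freezing'' of the $\mathcal{F}_{t_0}$-measurable initial value $X_{t_0}$ inside the regular conditional law, and a countable dense family of centers $f_k$ and radii $1/m$ in the separable space $C_0([t_0,T],\mathbb{R}^d)$ to consolidate the null sets in the direction (ii)$\Rightarrow$(i). The two points you flag as delicate (joint measurability behind the freezing identity, and the bookkeeping of null sets against the uncountable family of $f$'s) are indeed the only places requiring care, and your treatment of both is sound.
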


Now, Proposition 3 of \cite{Sayit2013} and Remark 2.2 of \cite{BPS2015} yield the following result:
\begin{proposition}\label{prop:cfs}
If a $d$-dimensional continuous $\mathbb{F}$-adapted process $X=(X_t)_{t\in[0,T]}$ have CFS with respect to $\mathbb{F}$, then $X$ satisfies the joint $\mathbb{F}$-CUD condition and is sticky with respect to $\mathbb{F}$.
\end{proposition}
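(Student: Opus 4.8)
The plan is to derive both conclusions separately from the conditional small ball property, which by Lemma \ref{lemma:csbp} is an equivalent reformulation of CFS. For the stickiness, the argument is essentially immediate: given $t\in[0,T)$ and $\delta>0$, apply Lemma \ref{lemma:csbp}(ii) with $t_0=t$ and the choice $f\equiv0\in C_0([t,T],\mathbb{R}^d)$ and $\varepsilon=\delta$; the resulting inequality
\[
P\left(\sup_{u\in[t,T]}\|X_u-X_t\|<\delta\,\big|\,\mathcal{F}_t\right)>0\qquad\text{a.s.}
\]
dominates (up to a harmless constant rescaling of $\delta$, or directly since $\sup_i|X^i_u-X^i_t|\le\|X_u-X_t\|$) the probability appearing in Definition \ref{def:sticky}, so $X$ is sticky. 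One should remark that this is exactly the content attributed to Remark 2.2 of \cite{BPS2015}.

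For the joint $\mathbb{F}$-CUD condition the argument is the one from Proposition 3 of \cite{Sayit2013}, which I would carry out as follows. Fix $h\in(0,T)$, stopping times $\tau_1\le\tau_2$ with $\tau_2\ge\tau_1+h$ a.s., a set $B\in\mathcal{F}_{\tau_1}$ with $P(B)>0$, signs $\alpha_1,\dots,\alpha_d\in\{+,-\}$, and suppose $I^B\ne\emptyset$. The goal is to show $P\big(\big[\bigcap_{i\in I^B}A_i^{\alpha_i}\big]\cap B\big)>0$. First I would reduce to a deterministic starting time: since $\tau_1\le T-h$ on a set of positive probability intersected with $B$ (this needs a small argument — if $\tau_1=T$ a.s.\ on $B$ then $\tau_2=T$ as well and $I^B=\emptyset$, contradicting the hypothesis, so after shrinking $B$ we may assume $\tau_1\le T-h$ there, and actually localize $\tau_1$ to lie in a small interval $[s_0,s_0+\eta]$ for suitable $s_0$), and then condition on $\mathcal{F}_{\tau_1}$. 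The key point is to choose, on (a positive-probability subset of) $B$, a target path $f\in C_0$ that over the time interval of length at least $h$ moves component $i$ strictly in the direction $\alpha_i$ by an amount large enough to overwhelm any oscillation of size $\varepsilon$; concretely a piecewise-linear $f$ with $f^i$ increasing (if $\alpha_i=+$) or decreasing (if $\alpha_i=-$) to some level $\pm2\varepsilon$ by time $s_0+h$ and then held constant works, and then on the event $\{\sup_t\|X_t-X_{s_0}-f(t)\|<\varepsilon\}$ one has $X^i_{\tau_2}-X^i_{\tau_1}$ of the correct sign for every $i$. Applying Lemma \ref{lemma:csbp}(ii) with $t_0=\tau_1$ (or the deterministic $s_0$ after the reduction) gives that this event has positive conditional probability a.s.\ on $B$, and integrating over $B$ yields the claim.

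The main obstacle is the passage from a deterministic initial time, which is what Lemma \ref{lemma:csbp} is phrased for, to the stopping times $\tau_1,\tau_2$ and to the requirement that the displacement happens specifically between $\tau_1$ and $\tau_2$ rather than between two fixed times; handling this cleanly requires either a stopping-time version of the conditional small ball property (which can be obtained from Lemma \ref{lemma:csbp} by a standard approximation of $\tau_1$ by countably-valued stopping times, analogous to the way Lemma \ref{lemma:sticky} is upgraded from Definition \ref{def:sticky}) or a careful discretization of $\tau_1$ combined with a monotonicity/continuity argument. I would also need to be slightly careful that the set $I^B$ — defined via $P(\{X^k_{\tau_1}\ne X^k_{\tau_2}\}\cap B)>0$ — is compatible with choosing a target path that genuinely separates those coordinates; but since CFS makes the conditional law fully supported, no coordinate can be a.s.\ frozen on a positive-probability piece of $B$ over an interval of length $h$, so there is no tension here. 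Once the stopping-time version of Lemma \ref{lemma:csbp} is in hand, the rest is the routine path-construction sketched above.
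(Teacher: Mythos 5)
Your argument is correct and coincides with what the paper actually does: the paper offers no proof of its own, merely citing Remark 2.2 of \cite{BPS2015} for the stickiness (your $f\equiv 0$ application of Lemma \ref{lemma:csbp}) and Proposition 3 of \cite{Sayit2013} for the joint CUD condition (your localization of $\tau_1$ to an interval $[s_0,s_0+\eta]$ followed by a monotone target path), and you correctly isolate the one delicate step, namely the passage from deterministic initial times to the stopping times $\tau_1,\tau_2$. One small point to watch when writing it out: $B\cap\{\tau_1\in[s_0,s_0+\eta]\}$ lies in $\mathcal{F}_{s_0+\eta}$ rather than $\mathcal{F}_{s_0}$, so the conditional small ball property should be invoked at $t_0=s_0+\eta$ and the target path's displacement must dominate not just $2\varepsilon$ but also the (boundable, after a further shrinking of $B$) gap $X_{s_0+\eta}-X_{\tau_1}$ --- which your ``large enough to overwhelm any oscillation'' phrasing already accommodates.
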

To conclude this section, we enumerate some useful results on the CFS property. 
The first result is a direct consequence of the above lemma and Lemma 2.2 of \cite{Pakkanen2010}:
\begin{lemma}\label{lemma:large}
Let $X=(X_t)_{t\in[0,T]}$ be a $d$-dimensional continuous $\mathbb{F}$-adapted process. Let $\mathbb{G}=(\mathcal{G}_t)_{t\in[0,T]}$ be a filtration of $\mathcal{F}$ such that $\mathcal{F}_t\subset\mathcal{G}_t$ for all $t\in[0,T]$. Then, $X$ has CFS with respect to $\mathbb{F}$ if it has CFS with respect to $\mathbb{G}$. 
\end{lemma}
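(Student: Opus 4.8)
The plan is to reduce the statement to the conditional small ball characterization of Lemma \ref{lemma:csbp} and then to exploit the tower property of conditional expectations. First I would note that, by Lemma \ref{lemma:csbp}, the hypothesis that $X$ has CFS with respect to $\mathbb{G}$ is equivalent to: for every $t_0\in[0,T)$, every $f\in C_0([t_0,T],\mathbb{R}^d)$ and every $\varepsilon>0$,
\[
P\left(\sup_{t\in[t_0,T]}\|X_t-X_{t_0}-f(t)\|<\varepsilon\ \middle|\ \mathcal{G}_{t_0}\right)>0\qquad\text{a.s.},
\]
and that it likewise suffices to verify the same property with $\mathcal{F}_{t_0}$ in place of $\mathcal{G}_{t_0}$ in order to conclude that $X$ has CFS with respect to $\mathbb{F}$. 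So I would fix $t_0$, $f$, $\varepsilon$ as above and set
\[
A:=\left\{\sup_{t\in[t_0,T]}\|X_t-X_{t_0}-f(t)\|<\varepsilon\right\},
\]
observing that, since the paths of $X$ and $f$ are continuous, the supremum may be taken over a countable dense subset of $[t_0,T]$, so that $A\in\mathcal{F}$.

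Since $\mathcal{F}_{t_0}\subset\mathcal{G}_{t_0}$, the tower property yields $P(A\mid\mathcal{F}_{t_0})=E\left[P(A\mid\mathcal{G}_{t_0})\mid\mathcal{F}_{t_0}\right]$ a.s. The CFS property with respect to $\mathbb{G}$ gives $P(A\mid\mathcal{G}_{t_0})>0$ a.s., so I would invoke the elementary fact — this is in essence Lemma 2.2 of \cite{Pakkanen2010} — that the $\mathcal{F}_{t_0}$-conditional expectation of a non-negative random variable which is a.s. strictly positive is itself a.s. strictly positive: if $E[P(A\mid\mathcal{G}_{t_0})\mid\mathcal{F}_{t_0}]=0$ held on some $B\in\mathcal{F}_{t_0}$ with $P(B)>0$, then $\int_B P(A\mid\mathcal{G}_{t_0})\,dP=0$ would force $P(A\mid\mathcal{G}_{t_0})=0$ a.s. on $B$, a contradiction. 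Hence $P(A\mid\mathcal{F}_{t_0})>0$ a.s., which is exactly condition (ii) of Lemma \ref{lemma:csbp} for the filtration $\mathbb{F}$, and applying that lemma once more delivers the claim.

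The argument is short, and I do not expect a substantive obstacle; the only points requiring some care are routine measurability bookkeeping — checking that $A$ is genuinely $\mathcal{F}$-measurable and depends only on $(X_t)_{t\in[t_0,T]}$, and that the regular conditional laws underlying Lemma \ref{lemma:csbp} are handled consistently for both filtrations — together with the alignment of the various $P$-null exceptional sets across the ``a.s.'' statements. Since Lemma \ref{lemma:csbp} is available without assuming the usual hypotheses on $(\Omega,\mathcal{F},\mathbb{G},P)$ or $(\Omega,\mathcal{F},\mathbb{F},P)$, none of this causes any real difficulty.
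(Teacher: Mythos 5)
Your argument is correct and is essentially the one the paper intends: the paper derives Lemma \ref{lemma:large} directly from Lemma \ref{lemma:csbp} together with Lemma 2.2 of \cite{Pakkanen2010}, which is precisely the tower-property fact you prove inline (that the $\mathcal{F}_{t_0}$-conditional probability of an event with a.s.\ positive $\mathcal{G}_{t_0}$-conditional probability is itself a.s.\ positive). No gaps.
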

The next one is a multivariate extension of Lemma 2.3 from \cite{Pakkanen2010}, which states that the CFS property is invariant under augmentation of the filtration in the usual way (see e.g.~page 45 of \cite{RY1999} for the definition of the usual augmentation of a filtration). The proof is an easy extension of the original one and we omit it.
\begin{lemma}\label{lemma:augmentation}
Let $X=(X_t)_{t\in[0,T]}$ be a $d$-dimensional continuous $\mathbb{F}$-adapted process. Then, $X$ has CFS with respect to $\mathbb{F}$ if and only if it has CFS with respect to the usual augmentation of $\mathbb{F}$. 
\end{lemma}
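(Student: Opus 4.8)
The plan is to treat the two implications separately and to work throughout with the conditional small ball reformulation of CFS from Lemma \ref{lemma:csbp}. Write $\tilde{\mathbb{F}}=(\tilde{\mathcal{F}}_t)_{t\in[0,T]}$ for the usual augmentation of $\mathbb{F}$, so that $\tilde{\mathcal{F}}_t=\bigcap_{s>t}(\mathcal{F}_s\vee\mathcal{N})$, where $\mathcal{N}$ denotes the collection of $P$-null sets. Since $\mathcal{F}_t\subset\tilde{\mathcal{F}}_t$ for every $t$, the process $X$ is $\tilde{\mathbb{F}}$-adapted, and the implication ``$X$ has CFS with respect to $\tilde{\mathbb{F}}$ $\Rightarrow$ $X$ has CFS with respect to $\mathbb{F}$'' is immediate from Lemma \ref{lemma:large} applied with $\mathbb{G}=\tilde{\mathbb{F}}$. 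The substance is the converse.

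So assume $X$ has CFS with respect to $\mathbb{F}$, and fix $t_0\in[0,T)$, $f\in C_0([t_0,T],\mathbb{R}^d)$ and $\varepsilon>0$; set $A=\{\sup_{t\in[t_0,T]}\|X_t-X_{t_0}-f(t)\|<\varepsilon\}$, which lies in $\mathcal{F}_T$ by continuity of $X$. By Lemma \ref{lemma:csbp} it suffices to prove $P(A\mid\tilde{\mathcal{F}}_{t_0})>0$ a.s., and since $\tilde{\mathcal{F}}_{t_0}$ is a sub-$\sigma$-field this is equivalent to showing $P(A\cap C)>0$ for every $C\in\tilde{\mathcal{F}}_{t_0}$ with $P(C)>0$. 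Fix such a $C$ and $\delta\in(0,\varepsilon)$, and for $s\in(t_0,T)$ put
\[
A'_s=\Big\{\sup_{t\in[s,T]}\|X_t-X_s-(f(t)-f(s))\|<\varepsilon-\delta\Big\},\qquad E_s=\Big\{\|X_s-X_{t_0}\|+\|f(s)\|<\delta\Big\}\cap\Big\{\sup_{t\in[t_0,s]}\|X_t-X_{t_0}-f(t)\|<\varepsilon\Big\}.
\]
Splitting the supremum over $[t_0,T]$ at the point $s$ and applying the triangle inequality on $[s,T]$ gives $A\supseteq A'_s\cap E_s$.

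Now $E_s\in\mathcal{F}_s$ and $A'_s\in\mathcal{F}_T$; moreover $C\in\mathcal{F}_s\vee\mathcal{N}$, so there is $C_s\in\mathcal{F}_s$ with $P(C\triangle C_s)=0$, whence $P(A\cap C)=P(A\cap C_s)$. Since $A'_s$ is exactly the conditional small ball event for $X$ on $[s,T]$ around the shifted path $t\mapsto f(t)-f(s)\in C_0([s,T],\mathbb{R}^d)$ with radius $\varepsilon-\delta$, the CFS hypothesis for $\mathbb{F}$ together with Lemma \ref{lemma:csbp} yields $P(A'_s\mid\mathcal{F}_s)>0$ a.s. Hence
\[
P(A\cap C)=P(A\cap C_s)\ge P(A'_s\cap E_s\cap C_s)=E\big[1_{E_s\cap C_s}\,P(A'_s\mid\mathcal{F}_s)\big],
\]
which is strictly positive as soon as $P(E_s\cap C_s)=P(E_s\cap C)>0$. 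Because the paths of $X$ and $f$ are continuous and $f(t_0)=0$, one has $1_{E_s}\to1$ a.s. as $s\downarrow t_0$, so $P(E_s\cap C)\to P(C)>0$ by dominated convergence; choosing $s\in(t_0,T)$ close enough to $t_0$ gives $P(E_s\cap C)>0$ and therefore $P(A\cap C)>0$, which is what was needed.

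The main obstacle is this last step: pushing the conditional small ball property through the right-continuous intersection defining $\tilde{\mathcal{F}}_{t_0}$. A direct attempt via L\'evy's downward convergence theorem, letting $s\downarrow t_0$ in $P(A\mid\mathcal{F}_s\vee\mathcal{N})$, runs into trouble because the positive lower bounds one obtains for $P(A\mid\mathcal{F}_s\vee\mathcal{N})$ may degenerate to zero in the limit; recasting ``$P(A\mid\tilde{\mathcal{F}}_{t_0})>0$ a.s.'' as ``$P(A\cap C)>0$ for all $C\in\tilde{\mathcal{F}}_{t_0}$ with $P(C)>0$'' and then approximating $A$ from the right by path continuity is what makes the argument go through. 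Enlarging each $\sigma$-field by null sets alone (with $t$ fixed) is harmless since conditional expectations are unchanged, so the entire difficulty is concentrated in the right-continuity; this is also why the proof is only ``an easy extension'' of the univariate case in \cite{Pakkanen2010}.
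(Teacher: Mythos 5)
Your proof is correct, and it supplies in full the argument that the paper omits (the paper only remarks that the lemma is ``an easy extension'' of Lemma 2.3 of \cite{Pakkanen2010}): the easy direction via Lemma \ref{lemma:large}, and for the converse the standard device of splitting the small-ball event at an intermediate time $s>t_0$, approximating a set $C\in\tilde{\mathcal{F}}_{t_0}$ by some $C_s\in\mathcal{F}_s$ modulo null sets, and letting $s\downarrow t_0$ using path continuity. This is exactly the approach underlying the cited univariate proof, so nothing further is needed.
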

The third one is a straightforward multivariate extension of Lemma 3.1 from \cite{GSvZ2011}. For a $d$-dimensional process $X=(X_t)_{t\in[0,T]}$ we write $\mathbb{F}^X=(\mathcal{F}_t^X)_{t\in[0,T]}$ the natural filtration of $X$.  
\begin{lemma}\label{lemma:equivalence}
Let $X=(X_t)_{t\in[0,T]}$ and $Y=(Y_t)_{t\in[0,T]}$ be $d$-dimensional continuous processes, possibly defined on different probability spaces. If the laws of $X$ and $Y$ on $C([0,T],\mathbb{R}^d)$ are equivalent, then $X$ has CFS with respect to $\mathbb{F}^X$ if and only if $Y$ has CFS with respect to $\mathbb{F}^Y$. 
\end{lemma}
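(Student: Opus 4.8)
The plan is to reduce the statement to the common canonical path space, where the laws of $X$ and $Y$ live side by side and can be compared, and then to use the abstract Bayes formula together with strict positivity of the Radon--Nikodym density. By Lemma~\ref{lemma:csbp} it suffices to work with the conditional small ball property rather than with the full-support statement itself.

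First I would set up the canonical space. Let $\Omega_0=C([0,T],\mathbb{R}^d)$, let $w=(w_t)_{t\in[0,T]}$ be the coordinate process, let $\mathbb{F}^w=(\mathcal{F}^w_t)_{t\in[0,T]}$ be its natural filtration, and let $\mu_X$ (resp.~$\mu_Y$) be the law of $X$ (resp.~$Y$) on $\Omega_0$; by hypothesis $\mu_X\sim\mu_Y$. Fix $t_0\in[0,T)$, $f\in C_0([t_0,T],\mathbb{R}^d)$ and $\varepsilon>0$, and set
\[
B=B_{t_0,f,\varepsilon}:=\Big\{g\in\Omega_0:\sup_{t\in[t_0,T]}\|g(t)-g(t_0)-f(t)\|<\varepsilon\Big\}\in\mathcal{F}^w_T .
\]
Since $\{\sup_{t\in[t_0,T]}\|X_t-X_{t_0}-f(t)\|<\varepsilon\}=X^{-1}(B)$ and $\mathcal{F}^X_{t_0}=X^{-1}(\mathcal{F}^w_{t_0})$, the standard change-of-variables property of conditional expectation under a measurable map gives
\[
P\Big(\sup_{t\in[t_0,T]}\|X_t-X_{t_0}-f(t)\|<\varepsilon\,\Big|\,\mathcal{F}^X_{t_0}\Big)=\big(E_{\mu_X}[\mathbf{1}_B\,|\,\mathcal{F}^w_{t_0}]\big)\circ X\qquad P\text{-a.s.},
\]
and likewise for $Y$ and $\mu_Y$. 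Taking images under $X$ and $Y$, we conclude via Lemma~\ref{lemma:csbp} that $X$ has CFS with respect to $\mathbb{F}^X$ if and only if $E_{\mu_X}[\mathbf{1}_B\,|\,\mathcal{F}^w_{t_0}]>0$ $\mu_X$-a.s.\ for every $(t_0,f,\varepsilon)$, and that $Y$ has CFS with respect to $\mathbb{F}^Y$ if and only if $E_{\mu_Y}[\mathbf{1}_B\,|\,\mathcal{F}^w_{t_0}]>0$ $\mu_Y$-a.s.\ for every $(t_0,f,\varepsilon)$. Hence it remains only to show that, when $\mu_X\sim\mu_Y$, these two families of conditions are equivalent.

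Fix $(t_0,f,\varepsilon)$ and the corresponding $B$, and let $Z=d\mu_Y/d\mu_X$, so $Z>0$ $\mu_X$-a.s.; since $\mu_X\sim\mu_Y$, the qualifiers ``$\mu_X$-a.s.''\ and ``$\mu_Y$-a.s.''\ are interchangeable. By the abstract Bayes rule,
\[
E_{\mu_Y}[\mathbf{1}_B\,|\,\mathcal{F}^w_{t_0}]=\frac{E_{\mu_X}[Z\mathbf{1}_B\,|\,\mathcal{F}^w_{t_0}]}{E_{\mu_X}[Z\,|\,\mathcal{F}^w_{t_0}]}\qquad\mu_X\text{-a.s.},
\]
and $E_{\mu_X}[Z\,|\,\mathcal{F}^w_{t_0}]>0$ a.s.\ because $Z>0$ a.s., so $E_{\mu_Y}[\mathbf{1}_B\,|\,\mathcal{F}^w_{t_0}]>0$ a.s.\ if and only if $E_{\mu_X}[Z\mathbf{1}_B\,|\,\mathcal{F}^w_{t_0}]>0$ a.s. Next I would remove the density. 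On the $\mathcal{F}^w_{t_0}$-measurable set $A_0:=\{E_{\mu_X}[Z\mathbf{1}_B\,|\,\mathcal{F}^w_{t_0}]=0\}$ one has $\int_{A_0}Z\mathbf{1}_B\,d\mu_X=0$, and $Z>0$ a.s.\ forces $\mu_X(B\cap A_0)=0$, i.e.\ $E_{\mu_X}[\mathbf{1}_B\,|\,\mathcal{F}^w_{t_0}]=0$ a.s.\ on $A_0$; conversely, on $A_1:=\{E_{\mu_X}[\mathbf{1}_B\,|\,\mathcal{F}^w_{t_0}]=0\}$ one has $\mu_X(B\cap A_1)=0$, hence $Z\mathbf{1}_B=0$ a.s.\ on $A_1$ and $E_{\mu_X}[Z\mathbf{1}_B\,|\,\mathcal{F}^w_{t_0}]=0$ a.s.\ on $A_1$. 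Thus $A_0=A_1$ up to a null set, giving $E_{\mu_Y}[\mathbf{1}_B\,|\,\mathcal{F}^w_{t_0}]>0$ a.s.\ $\Leftrightarrow$ $E_{\mu_X}[\mathbf{1}_B\,|\,\mathcal{F}^w_{t_0}]>0$ a.s.; since $(t_0,f,\varepsilon)$ was arbitrary, this is exactly the asserted equivalence of CFS for $X$ and $Y$.

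The only genuinely delicate point is the reduction in the second paragraph: one must make sure the $\mathbb{F}^X$-conditional small ball probability is, up to a $P$-null set, a \emph{fixed} measurable functional of the path whose form depends on $X$ only through its law $\mu_X$ — this is what lets the question be transported to the single space $\Omega_0$ carrying both $\mu_X$ and $\mu_Y$. Once that is in place, the Bayes computation and the strict-positivity argument for $Z$ are entirely routine.
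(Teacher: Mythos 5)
Your argument is correct and complete: the reduction to the canonical space $C([0,T],\mathbb{R}^d)$ via the pullback identity for conditional expectations, followed by the abstract Bayes formula and the strict positivity of $d\mu_Y/d\mu_X$, is exactly the standard route. The paper itself does not spell out a proof of Lemma~\ref{lemma:equivalence} but cites Lemma~3.1 of Gasbarra, Sottinen and van~Zanten (2011), whose univariate argument proceeds along the same lines; your write-up simply supplies the (dimension-independent) details that the paper delegates to that reference.
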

\if0
As a special case, we obtain the law invariance of the CFS property (see also Lemma 2.4 of \cite{Pakkanen2010}):
\begin{corollary}
Let $X=(X_t)_{t\in[0,T]}$ and $Y=(Y_t)_{t\in[0,T]}$ be $d$-dimensional continuous processes, possibly defined on different probability spaces. If the laws of $X$ and $Y$ on $C([0,T],\mathbb{R}^d)$ are identical, then $X$ has CFS with respect to $\mathbb{F}^X$ if and only if $Y$ has CFS with respect to $\mathbb{F}^Y$.
\end{corollary}
\fi
The last one is a straightforward multivariate extension of Lemma 3.2 from \cite{GSvZ2011} (see also Remark 2.4(iii) of \cite{PSY2017}):
\begin{lemma}\label{lemma:sum}
Let $X=(X_t)_{t\in[0,T]}$ and $Y=(Y_t)_{t\in[0,T]}$ be mutually independent $d$-dimensional continuous $\mathbb{F}$-adapted processes. If $X$ has CFS with respect to $\mathbb{F}^X$, then $X+Y$ also has CFS with respect to its natural filtration.
\end{lemma}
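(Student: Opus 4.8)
The plan is to verify, for $Z:=X+Y$, the conditional small ball property of Lemma~\ref{lemma:csbp} with respect to its natural filtration $\mathbb{F}^{Z}$, reducing everything to the CFS of $X$ with respect to $\mathbb{F}^{X}$ by conditioning on an enlarged $\sigma$-field containing the \emph{entire} trajectory of $Y$. Fix $t_0\in[0,T)$, $g\in C_0([t_0,T],\mathbb{R}^d)$ and $\varepsilon>0$, and set $\mathcal{H}:=\mathcal{F}^{X}_{t_0}\vee\sigma(Y_u:u\in[0,T])$; since $Z_s=X_s+Y_s$, we have $\mathcal{F}^{Z}_{t_0}\subseteq\mathcal{F}^{X}_{t_0}\vee\mathcal{F}^{Y}_{t_0}\subseteq\mathcal{H}$. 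Writing $\hat g(t):=g(t)-(Y_t-Y_{t_0})$, a continuous $\mathbb{R}^d$-valued random function with $\hat g(t_0)=0$, the pathwise identity $Z_t-Z_{t_0}-g(t)=(X_t-X_{t_0})-\hat g(t)$ together with the tower property (applicable because $\mathcal{F}^{Z}_{t_0}\subseteq\mathcal{H}$) gives
\[
P\Big(\sup_{t\in[t_0,T]}\|Z_t-Z_{t_0}-g(t)\|<\varepsilon\;\Big|\;\mathcal{F}^{Z}_{t_0}\Big)=E\Big[P\Big(\sup_{t\in[t_0,T]}\|(X_t-X_{t_0})-\hat g(t)\|<\varepsilon\;\Big|\;\mathcal{H}\Big)\;\Big|\;\mathcal{F}^{Z}_{t_0}\Big].
\]
Since the conditional expectation of an a.s.\ positive random variable is a.s.\ positive, it suffices to show that the inner conditional probability is a.s.\ positive.

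For that I would combine two observations. First, because $X$ and $Y$ are independent, $\sigma(X)$ and $\sigma(Y)$ are conditionally independent given $\mathcal{F}^{X}_{t_0}$; hence a regular conditional law $\nu_\omega$ of $(X_t-X_{t_0})_{t\in[t_0,T]}$ given $\mathcal{H}$ (which exists as $C_0([t_0,T],\mathbb{R}^d)$ is Polish) coincides a.s.\ with a regular conditional law given $\mathcal{F}^{X}_{t_0}$, and by the CFS of $X$ with respect to $\mathbb{F}^{X}$ --- translating the support identity by the $\mathcal{F}^{X}_{t_0}$-measurable vector $X_{t_0}$ --- we get $\supp\nu_\omega=C_0([t_0,T],\mathbb{R}^d)$ for a.e.\ $\omega$. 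Second, since $\hat g$ is an $\mathcal{H}$-measurable random element of $C_0([t_0,T],\mathbb{R}^d)$, a standard freezing (disintegration) argument identifies the inner conditional probability with $\nu_\omega\big(\{h:\|h-\hat g(\omega)\|_{\infty}<\varepsilon\}\big)$ for a.e.\ $\omega$; as $\hat g(\omega)\in C_0([t_0,T],\mathbb{R}^d)$ and the open $\varepsilon$-ball about it is nonempty, the full-support property of $\nu_\omega$ makes this a.s.\ positive. Thus $Z$ enjoys the conditional small ball property, and Lemma~\ref{lemma:csbp} yields that $X+Y$ has CFS with respect to $\mathbb{F}^{Z}$.

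The routine ingredients are the pathwise identity and the tower property. The step requiring genuine care is the conditional-independence bookkeeping: one must enlarge the conditioning $\sigma$-field from $\mathcal{F}^{X}_{t_0}$ to $\mathcal{H}$ so that $\hat g$ becomes a fixed target and the event is about the future increments of $X$ alone, while keeping $\mathcal{F}^{Z}_{t_0}\subseteq\mathcal{H}$ so that the tower property applies, and then check that appending the whole path of $Y$ does not disturb the conditional law of $(X_t-X_{t_0})_{t\in[t_0,T]}$ given $\mathcal{F}^{X}_{t_0}$ --- which is precisely where the independence of $X$ and $Y$ is used. Once that is in place, the conclusion is immediate from the support reformulation of the CFS of $X$ (Lemma~\ref{lemma:csbp}) and the elementary fact about conditional expectations of positive variables.
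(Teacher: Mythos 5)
Your proof is correct. The paper itself omits the argument, deferring to Lemma 3.2 of Gasbarra, Sottinen and van Zanten (2011) and calling the multivariate extension straightforward; the proof there is essentially the one you give --- enlarge the conditioning $\sigma$-field to $\mathcal{F}^{X}_{t_0}\vee\sigma(Y_u:u\in[0,T])$, use the independence of $X$ and $Y$ to see that this enlargement does not alter the conditional law of the future increments of $X$ given $\mathcal{F}^{X}_{t_0}$, and freeze the $\sigma(Y)$-measurable target $\hat g$ --- and the two delicate points you isolate (the conditional-independence identity $E[\xi\,|\,\mathcal{F}^{X}_{t_0}\vee\sigma(Y)]=E[\xi\,|\,\mathcal{F}^{X}_{t_0}]$ for $\sigma(X)$-measurable $\xi$, and the disintegration identifying the inner conditional probability with the $\nu_\omega$-measure of a nonempty open ball) are both handled correctly.
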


\begin{table}[ht]
\caption{Processes having CFS (under some reasonable assumptions)}
\label{table:cfs}
\begin{center}
\begin{tabular}{lll}\hline
Process & & Source \\ \hline
\multicolumn{3}{c}{Univariate processes} \\
Fractional Brownian motion & & \citet{GRS2008}, Proposition 4.2 \\
Integrated process & & \citet{GRS2008}, Lemma 4.5 \\
Brownian moving average & & \citet{Cherny2008}, Theorem 1.1 \\
It\^o process & & \citet{Pakkanen2010} \\
Brownian semistationary process & & \citet{Pakkanen2011}, Corollary 3.1 \\
Gaussian process with stationary increments & & \citet{GSvZ2011}, Theorem 2.1 \\
\multicolumn{3}{c}{Multivariate processes} \\
Diffusion process & & \citet{GRS2008}, Example 4.1 \\ 
Independent fBms with possibly different Hurst parameters & & \citet{SV2011}, Proposition 3 \\ 
Multi-dimensional It\^o process & & \citet{HPR2014}, Theorem 2 \\
Multivariate Brownian moving average & & \citet{PSY2017}, Theorem 2.7 \\ \hline
\end{tabular}
\end{center}
\label{default}
\end{table}%

\if0
Let $W=(W_t)_{t\in\mathbb{R}}$ be a two-sided $d$-dimensional Wiener process such that $W_0=0$. Let $\Phi,\Psi$ be $d\times d$ matrix-valued measurable functions on $\mathbb{R}$, which satisfy $\Phi(t)=\Psi(t)=0$ for $t<0$. Define a kernel function
\[
K(t,u):=\Phi(t-u)-\Psi(-u),\qquad(t,u)\in\mathbb{R}^2.
\]
We assume that
\[
\int_{-\infty}^t\|K(t,u)\|_F^2du<\infty
\]
for all $t\geq0$. Then, by Corollary 4.1 of \cite{BOGP2014} (see also page 753 of \cite{PSY2017}) we can define the $d$-dimensional process $X=(X_t)_{t\geq0}$ by
\begin{equation}\label{ma-process}
X_t:=\int_{-\infty}^tK(t,u)dW_u,\qquad t\geq0,
\end{equation}
where the stochastic integral in \eqref{ma-process} is defined as a limit in probability of integrals of simple functions, as is explained in the Appendix of \cite{BOGP2014}. Moreover, for each $t_0\geq0$, we also define the process $A^{t_0}=(A^{t_0}_t)_{t\geq t_0}$ by
\[
A^{t_0}_t:=\int_{-\infty}^{t_0}\left(\Phi(t-u)-\Phi(t_0-u)\right)dW_u,\qquad t\geq t_0.
\]
For $i,j\in\{1,\dots,d\}$, we denote by $\Phi_{i,j}$ the $(i,j)$-th component function of $\Phi$. For a measurable function $f:\mathbb{R}\to\mathbb{R}$, we denote by $\esssupp f$ the essential support of $f$, which is defined as the smallest closed set of $\mathbb{R}$ such that $f=0$ almost everywhere in its complement. 

\begin{theorem}
Suppose that the processes $X$ and $A^{t_0}$, for any $t_0\geq0$, have continuous versions respectively. Suppose also that (at least) either $i<j\Rightarrow \Phi_{i,j}=0$ or $i>j\Rightarrow \Phi_{i,j}=0$ holds true. If we have
\[
0\in\esssupp\Phi_{i,i}
\]
for any $i=1,\dots,d$, then $(X_t)_{t\in[0,T]}$ has CFS with respect to its augmented natural filtration for any $T>0$.
\end{theorem}
\fi

\section{No arbitrage with lead-lag relationships}\label{sec:main}

\subsection{Hoffmann-Rosenbaum-Yoshida model}

\citet{HRY2013} have proposed a novel continuous-time model for modeling lead-lag relationships. Roughly speaking, their model consists of one semimartingale and another ``delayed'' semimartingale. In the following we give a more precise description of a simplified version of their model which we focus on in this paper.  
 
Let $B^1=(B^1_t)_{t\in[0,\infty)}$ and $B^2=(B^2_t)_{t\in[0,\infty)}$ be two standard Brownian motions such that
\[
E[(B^1_t-B^1_s)(B^2_{t+\theta}-B^2_{s+\theta})]=\int_s^t\rho(u)du
\]
for $0\leq s<t<\infty$, where $\theta\geq0$ and $\rho:[0,\infty)\to[-1,1]$ is a deterministic function. Formally, such $B^1$ and $B^2$ can be constructed as follows. Let $W^k=(W^k_t)_{t\in[0,\infty)}$, $k=0,1,2,3$, be mutually independent standard Wiener processes. %Then, we can easily check that the processes $B^1$ and $B^2$ defined by
We define the processes $B^1$ and $B^2$ by
\begin{equation}\label{hry-real}
\left\{\begin{array}{l}
B^1_t=\int_0^t\sign(\rho(u))\sqrt{|\rho(u)|}dW^1_u+\int_0^{t}\sqrt{1-|\rho(u)|}dW^2_u,\\
B^2_t=W^0_{t\wedge\theta}+\int_0^{(t-\theta)_+}\sqrt{|\rho(u)|}dW^1_u+\int_0^{(t-\theta)_+}\sqrt{1-|\rho(u)|}dW^3_u
\end{array}\right.
\end{equation}
for $t\geq0$. It is not difficult to check that these $B^1$ and $B^2$ are the desired ones. 

Now, for each $\nu=1,2$, the (discounted) log-price process of the $\nu$-th risky asset is given by
\[
X^\nu_t=A^\nu_t+\int_0^t\sigma_\nu(u)dB^\nu_u,\qquad t\in[0,T],
\]
where $\sigma_\nu\in L^2(0,T)$ and $A^\nu=(A^\nu_t)_{t\in[0,T]}$ is a continuous process. In the original paper  \cite{HRY2013} the process $A^\nu$ is assumed to be of finite variation, but in this paper we will instead assume that $A^\nu$ is independent of the process $B$. 

Let us consider the bivariate process $X=(X^1,X^2)$. As the filtration $\mathbb{F}$, we consider the usual augmentation of $\mathbb{F}^X$. The following proposition establishes the CFS property of $X$ with respect to $\mathbb{F}$. 
\begin{proposition}\label{hry-cfs}
Suppose that the process $A=(A^1,A^2)$ is independent of $B$. Suppose also that $\mathrm{Leb}(\{t\in[0,T]:\sigma_\nu(t)=0\})=0$ for $\nu=1,2$ and $\mathrm{Leb}(\{t\in[0,T]:|\rho(t)|=1\})=0$. Then the process $X$ has CFS with respect to $\mathbb{F}$. 
\end{proposition}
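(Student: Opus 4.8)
The plan is to peel $X$ down to a Gaussian process with an explicit law and then invoke the conditional full support of multidimensional It\^o processes together with the stability lemmas above. Write $X=M+A$, where $M=(M^1,M^2)$ with $M^\nu_t=\int_0^t\sigma_\nu(u)\,dB^\nu_u$. Since $A$ is assumed independent of $B$, it is independent of $M$; hence, by Lemma~\ref{lemma:sum}, it suffices to prove that $M$ has CFS with respect to $\mathbb{F}^M$, because then $X=M+A$ has CFS with respect to $\mathbb{F}^X$, and Lemma~\ref{lemma:augmentation} upgrades this to CFS with respect to the usual augmentation $\mathbb{F}$ of $\mathbb{F}^X$.

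The process $M$ is a two-dimensional centered continuous Gaussian process, so its law on $C([0,T],\mathbb{R}^2)$ is determined by its covariance function. From the construction \eqref{hry-real} one computes $\langle B^1\rangle_u=\langle B^2\rangle_u=u$ and $\langle B^1,B^2\rangle_u=\int_0^{(u-\theta)_+}\rho(w)\,dw$, so that for $s,t\in[0,T]$,
\[
\covariance(M_s,M_t)=\int_0^{s\wedge t}c(u)\,du,\qquad
c(u)=\begin{pmatrix}\sigma_1(u)^2 & \sigma_1(u)\sigma_2(u)\rho(u-\theta)1_{(\theta,T]}(u)\\[3pt] \sigma_1(u)\sigma_2(u)\rho(u-\theta)1_{(\theta,T]}(u) & \sigma_2(u)^2\end{pmatrix}.
\]
I would then check that $c(u)$ is symmetric positive definite for Lebesgue-a.e.\ $u\in[0,T]$: since $|\rho|\le1$ we have $\det c(u)=\sigma_1(u)^2\sigma_2(u)^2\bigl(1-\rho(u-\theta)^2 1_{(\theta,T]}(u)\bigr)\ge0$, and this is strictly positive outside a Lebesgue-null set because $\mathrm{Leb}(\{\sigma_\nu=0\})=0$, $\mathrm{Leb}(\{|\rho|=1\})=0$, and the $\theta$-translate $\{u\in(\theta,T]:|\rho(u-\theta)|=1\}$ of the latter set is again Lebesgue-null. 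Moreover $\int_0^T\trace c(u)\,du=\int_0^T\bigl(\sigma_1(u)^2+\sigma_2(u)^2\bigr)\,du<\infty$.

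It follows that $M$ has the same law on $C([0,T],\mathbb{R}^2)$ as the process $N=(N_t)_{t\in[0,T]}$, $N_t=\int_0^t\sqrt{c(u)}\,dW_u$, where $W$ is a two-dimensional standard Brownian motion and $\sqrt{c(u)}$ is the symmetric positive-semidefinite square root of $c(u)$. This $N$ is a two-dimensional It\^o process with deterministic, a.e.\ invertible diffusion coefficient satisfying $\int_0^T\trace c(u)\,du<\infty$, hence it has CFS with respect to its natural filtration by the multidimensional It\^o result recorded in Table~\ref{table:cfs} (\citet{HPR2014}, Theorem~2; if that statement is phrased for the driving Brownian filtration, Lemma~\ref{lemma:large} gives CFS with respect to $\mathbb{F}^N$ as well). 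Since the laws of $M$ and $N$ on $C([0,T],\mathbb{R}^2)$ coincide, a fortiori are equivalent, Lemma~\ref{lemma:equivalence} transfers the CFS property from $N$ to $M$; thus $M$ has CFS with respect to $\mathbb{F}^M$, and by the first paragraph this completes the proof.

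I expect the only delicate points to be, first, reading off the covariance matrix $c$ correctly from the rather implicit construction \eqref{hry-real}---in particular getting the $\theta$-lag into the off-diagonal entry and noting that the apparent degeneracy on $[0,\theta]$, where the two coordinates are momentarily uncorrelated, is harmless since each diagonal entry remains a.e.\ positive there---and, second, confirming that the cited multidimensional It\^o CFS theorem genuinely covers a merely measurable, possibly discontinuous (deterministic) diffusion coefficient, not only continuous ones. Should the latter be a concern, one can instead split $M=G+L$, with $G$ collecting the two integrals driven by the common Wiener process $W^1$ and $L$ the integrals driven by $(W^0,W^2,W^3)$: then $G\indepe L$, so Lemma~\ref{lemma:sum} reduces matters to $L$, whose two coordinates are independent one-dimensional continuous Gaussian martingales with a.e.\ nonvanishing deterministic quadratic-variation densities; each coordinate then has CFS by the univariate It\^o case (Table~\ref{table:cfs}, \citet{Pakkanen2010}) together with Lemma~\ref{lemma:equivalence}, and the conditional small ball reformulation (Lemma~\ref{lemma:csbp}) combined with the independence factorizes the relevant conditional probability into a product of a.s.\ positive factors.
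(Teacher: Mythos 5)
Your proposal is sound in substance, but its two routes deserve different assessments. The primary route --- identify $M$ in law with $N_t=\int_0^t\sqrt{c(u)}\,dW_u$ and cite the multivariate It\^o result of \citet{HPR2014} --- is genuinely different from the paper's argument, and the worry you yourself flag is the real obstruction: that theorem is proved under nondegeneracy/integrability hypotheses on the volatility matrix that go beyond ``$c(u)$ invertible for Lebesgue-a.e.\ $u$''. Under the stated assumptions $\sigma_\nu$ may be arbitrarily small on sets of positive measure and $|\rho|$ may approach $1$, so $\det c(u)$ need not be bounded away from zero and $c(u)^{-1}$ need not even be integrable; as written, the primary citation therefore does not cover the general case. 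Your fallback closes the gap and is in fact essentially the paper's own proof: after removing $A$ by Lemma~\ref{lemma:sum}, passing to the realization \eqref{hry-real} by Lemma~\ref{lemma:equivalence}, and invoking Lemma~\ref{lemma:augmentation}, the paper splits $X$ into the $W^1$-driven part plus an independent part $Y$ driven by $(W^0,W^2,W^3)$, applies Lemma~\ref{lemma:sum} once more, and establishes the small-ball property of the two independent coordinates of $Y$ by passing to a larger filtration (Lemmas~\ref{lemma:csbp} and~\ref{lemma:large}) and citing Lemma~3.1 of \citet{Pakkanen2010}; your factorization of the conditional probability via independence of the coordinates plays exactly the same role. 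Your covariance computation, including the $\theta$-shift in the off-diagonal entry and the observation that the uncorrelatedness on $[0,\theta]$ is harmless, is correct.

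One small correction to the fallback as phrased: the CFS of a Wiener integral $\int_0^\cdot f(u)\,dW_u$ with deterministic, a.e.\ nonvanishing $f\in L^2$ does not follow from Lemma~\ref{lemma:equivalence} applied against a standard Brownian motion, since the two laws on $C([0,T],\mathbb{R})$ are mutually singular unless $\int_0^t f(u)^2du=t$. The correct mechanism is the deterministic time change $t\mapsto\int_0^t f(u)^2du$, which is continuous and strictly increasing precisely because $f\neq0$ a.e.; this is what Lemma~3.1 of \citet{Pakkanen2010} (the result the paper actually uses) encapsulates. With that substitution, your fallback argument is complete and coincides with the paper's proof.
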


Let us recall that, for each $\nu=1,2$, the (discounted) price process of $\nu$-th risky asset is given by $S^\nu_t=\exp(X^\nu_t)$, $t\in[0,T]$. Combining the above proposition with Propositions \ref{prop:sayit}--\ref{prop:sticky-trans} and \ref{prop:arb-tc2}--\ref{prop:cfs}, we obtain the following no-arbitrage properties of the Hoffmann-Rosenbaum-Yoshida model:
\begin{theorem}
Under the assumptions of Proposition \ref{hry-cfs}, the following statements hold true:
\begin{enumerate}[label=(\alph*)]

\item The market $S$ has no arbitrage in the Cheridito class.

\item The market $S$ has an $\varepsilon$-CPS for any $\varepsilon>0$. Consequently, $S$ has no arbitrage with $\varepsilon$-transaction costs in the class $\tilde{\mathcal{A}}^\varepsilon$ for any $\varepsilon>0$. 

\end{enumerate}
\end{theorem}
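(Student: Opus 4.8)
The plan is to derive the theorem by assembling the results reviewed in Section~\ref{sec:review}, with Proposition~\ref{hry-cfs} supplying the only substantive input. First I would record two elementary facts about the setup. One: $X$ is continuous, since each $X^\nu$ is the sum of the continuous process $A^\nu$ and the Wiener integral $\int_0^\cdot\sigma_\nu(u)\,dB^\nu_u$, which admits a continuous version. Two: because $S^\nu_t=\exp(X^\nu_t)$ and $\exp$ is a homeomorphism of $\mathbb{R}$ onto $(0,\infty)$, the process $S$ is continuous, strictly positive (so in particular quasi-left continuous), and satisfies $\mathbb{F}^S=\mathbb{F}^X$; hence $S$ is adapted to $\mathbb{F}$, the usual augmentation of $\mathbb{F}^X$, and $\mathbb{F}$ satisfies the usual conditions required by the frameworks of Section~\ref{sec:review}. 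By Proposition~\ref{hry-cfs}, $X$ has CFS with respect to $\mathbb{F}$, so by Proposition~\ref{prop:cfs} it both satisfies the joint $\mathbb{F}$-CUD condition with respect to $\mathcal{L}(\mathbb{F})$ and is sticky with respect to $\mathbb{F}$.

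For part (a), I would apply Proposition~\ref{prop:cud-trans} with $f_1=f_2=\exp$, which is strictly increasing: since $X$ satisfies the joint $\mathbb{F}$-CUD condition, so does $S=(\exp(X^1),\exp(X^2))$. Proposition~\ref{prop:sayit} then yields the absence of arbitrage in the Cheridito class.

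For part (b), I would first transfer stickiness from $X$ to $S$: the map $(x^1,x^2)\mapsto(\exp x^1,\exp x^2)$ on $\mathbb{R}^2$ is continuous, so Proposition~\ref{prop:sticky-trans} shows $S$ is sticky with respect to $\mathbb{F}$. As $S$ is continuous, Proposition~\ref{prop:cps} then gives an $\varepsilon$-CPS for $S$ for every $\varepsilon>0$; and since $S^i_t>0$ for all $i$ and $t$, Proposition~\ref{prop:arb-tc2} applied for each fixed $\varepsilon$ delivers the absence of arbitrage with $\varepsilon$-transaction costs in the class $\tilde{\mathcal{A}}^\varepsilon$.

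There is no genuine obstacle once Proposition~\ref{hry-cfs} is in hand; the only points needing care are bookkeeping ones — verifying that every cited proposition is invoked with respect to the same filtration $\mathbb{F}$, that $\mathbb{F}$ meets the usual conditions demanded in the simple-strategies and transaction-cost settings, and that the exponential transform preserves precisely the structural hypotheses each step requires (strict monotonicity for the CUD transfer, continuity for the stickiness transfer, continuity and strict positivity of $S$ for Propositions~\ref{prop:cps} and~\ref{prop:arb-tc2}). The real content of the argument lies entirely in the CFS statement of Proposition~\ref{hry-cfs}.
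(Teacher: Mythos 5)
Your argument is correct and is exactly the paper's intended proof: the paper simply states that the theorem follows by combining Proposition \ref{hry-cfs} with Propositions \ref{prop:sayit}--\ref{prop:sticky-trans} and \ref{prop:arb-tc2}--\ref{prop:cfs}, which is precisely the chain you spell out (CFS $\Rightarrow$ joint CUD and stickiness of $X$, transferred to $S=\exp(X)$ via the monotone/continuous invariance propositions, then Propositions \ref{prop:sayit}, \ref{prop:cps} and \ref{prop:arb-tc2}). Your added bookkeeping about continuity, positivity and the filtration is a harmless elaboration of what the paper leaves implicit.
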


\subsection{Brownian motions with a general lead-lag structure}

In order to extend the Hoffmann-Rosenbaum-Yoshida model, \citet{HK2016} have investigated possible lead-lag structures for two Brownian motions, and they have obtained the following result:
\begin{proposition}[\cite{HK2016}, Proposition 2]\label{characterization}
Suppose that a measurable function $f:\mathbb{R}\to\mathbb{C}$ satisfies
\begin{equation}\label{L_infty}
\|f\|_\infty\leq1
\end{equation}
and
\begin{equation}\label{hermite}
\overline{f(\lambda)}=f(-\lambda)\qquad\text{for almost all }\lambda\in\mathbb{R}.
\end{equation}
Then there is a bivariate Gaussian process $B_t=(B^1_t,B^2_t)$ ($t\in\mathbb{R}$) with stationary increments such that
\begin{enumerate}[label={\normalfont(\roman*)}]

\item\label{hk-1} both $B^1$ and $B^2$ are two-sided Brownian motions,

\item\label{hk-2} $f$ is the cross-spectral density of $B$. That is,
\begin{equation*}%\label{s-csd}
E\left[B_t^1B^2_s\right]=\frac{1}{2\pi}\int_{-\infty}^\infty\frac{(e^{-\sqrt{-1}\lambda t}-1)(e^{\sqrt{-1}\lambda s}-1)}{\lambda^2}f(\lambda)d\lambda
\end{equation*}
for any $t,s\in \mathbb{R}$.

\end{enumerate}

Conversely, if a bivariate process $B_t=(B^1_t,B^2_t)$ ($t\in\mathbb{R}$) with stationary increments satisfies condition \ref{hk-1}, there is a measurable function $f:\mathbb{R}\to\mathbb{C}$ satisfying \eqref{L_infty}--\eqref{hermite} and condition \ref{hk-2}. 
\end{proposition}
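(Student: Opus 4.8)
The plan is to prove the two directions via the spectral representation of Gaussian processes with stationary increments, building on the classical spectral representation of Brownian motion. For the first assertion I would construct $B$ directly from Gaussian random spectral measures. Fix a complex Gaussian random measure $W^1$ on $\mathbb{R}$ with control measure $\tfrac{1}{2\pi}\mathrm{Leb}$ and Hermitian symmetry $W^1(-A)=\overline{W^1(A)}$ a.s.\ (the measure appearing in the spectral representation $B^1_t=\int_{\mathbb{R}}\frac{e^{-\sqrt{-1}\lambda t}-1}{-\sqrt{-1}\lambda}\,W^1(d\lambda)$ of a two-sided Brownian motion), take an independent copy $\widetilde W$ with the same properties, and put
\[ W^2(d\lambda):=\overline{f(\lambda)}\,W^1(d\lambda)+\sqrt{1-|f(\lambda)|^2}\,\widetilde W(d\lambda), \]
which is legitimate because $\|f\|_\infty\le1$ by \eqref{L_infty}. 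A short computation then shows that $W^2$ again has control measure $\tfrac{1}{2\pi}\mathrm{Leb}$, that the cross control measure of $(W^1,W^2)$ is $\tfrac{1}{2\pi}f(\lambda)\,d\lambda$, and --- here \eqref{hermite} enters in an essential way --- that $W^2$ inherits $W^2(-A)=\overline{W^2(A)}$ a.s. Setting $B^\nu_t:=\int_{\mathbb{R}}\frac{e^{-\sqrt{-1}\lambda t}-1}{-\sqrt{-1}\lambda}\,W^\nu(d\lambda)$ for $\nu=1,2$, the integrand is square-integrable against $\mathrm{Leb}$ for each $t$, so each $B^\nu$ is a well-defined mean-zero Gaussian process, real-valued thanks to the symmetry of $W^\nu$.

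The verification is routine. Writing $g_t(\lambda)=(e^{-\sqrt{-1}\lambda t}-1)/(-\sqrt{-1}\lambda)$, the isometry of the stochastic integral gives
\[ E[B^\nu_tB^\mu_s]=\frac{1}{2\pi}\int_{\mathbb{R}}g_t(\lambda)\overline{g_s(\lambda)}\,h_{\nu\mu}(\lambda)\,d\lambda,\qquad h_{11}\equiv h_{22}\equiv1,\quad h_{12}=f. \]
For $\nu=\mu$ this is the classical identity $\tfrac{1}{2\pi}\int_{\mathbb{R}}\frac{(e^{-\sqrt{-1}\lambda t}-1)(e^{\sqrt{-1}\lambda s}-1)}{\lambda^2}\,d\lambda=\tfrac12(|t|+|s|-|t-s|)$, the covariance function of a two-sided Brownian motion with $B_0=0$, which yields condition \ref{hk-1}; for $\nu=1,\mu=2$ it is exactly the identity in \ref{hk-2}. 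Stationarity of the increments of $B$ follows because $g_{t+h}(\lambda)-g_{s+h}(\lambda)=e^{-\sqrt{-1}\lambda h}\,(g_t(\lambda)-g_s(\lambda))$, and multiplying all the $W^\nu$ by the unimodular factor $e^{-\sqrt{-1}\lambda h}$ leaves the control measures and the symmetry unchanged, hence leaves the finite-dimensional laws of the increment process unchanged.

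For the converse, I would invoke the spectral representation theorem for mean-zero multivariate Gaussian processes with stationary increments: there exist a $2\times2$ Hermitian, nonnegative-definite matrix measure $F=(F_{jk})_{j,k=1,2}$ on $\mathbb{R}$ with $F_{jk}(-A)=\overline{F_{jk}(A)}$ and a nonnegative-definite matrix $\Sigma$ such that
\[ E[B^j_tB^k_s]=\Sigma_{jk}\,ts+\int_{\mathbb{R}}\frac{(e^{-\sqrt{-1}\lambda t}-1)(e^{\sqrt{-1}\lambda s}-1)}{\lambda^2}\,F_{jk}(d\lambda). \]
Since $B^1$ and $B^2$ are Brownian motions, $E[(B^j_t)^2]=|t|$ forces $\Sigma_{jj}=0$; then $|E[B^1_tB^2_s]|\le\sqrt{|t||s|}$ together with $|F_{12}|\le\tfrac12(F_{11}+F_{22})$ (a consequence of nonnegative-definiteness of $F$) forces $\Sigma_{12}=0$ as well, upon taking $t=s\to\infty$. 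Comparing diagonal entries with the Brownian covariance and using the uniqueness of the spectral measure of the increments gives $F_{11}=F_{22}=\tfrac{1}{2\pi}\mathrm{Leb}$, whence $|F_{12}(A)|^2\le F_{11}(A)F_{22}(A)$ for every Borel $A$; thus $F_{12}\ll\mathrm{Leb}$ with a density of modulus at most $1$, i.e.\ $F_{12}(d\lambda)=\tfrac{1}{2\pi}f(\lambda)\,d\lambda$ with $\|f\|_\infty\le1$, and $F_{12}(-A)=\overline{F_{12}(A)}$ translates into \eqref{hermite}. Substituting back yields \ref{hk-2}.

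The only non-mechanical ingredient is the spectral representation theorem --- together with the uniqueness of the spectral measure of the increments --- for multivariate processes with stationary increments; invoking it in the correct form, and in particular ruling out the deterministic quadratic term $\Sigma$, where the finite-variance, $\propto|t|$ structure of the Brownian marginals is exactly what is needed, is the step I expect to require the most care.
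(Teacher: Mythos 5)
This proposition is quoted from \cite{HK2016} (their Proposition 2); the present paper states it without proof, so there is no in-paper argument to measure yours against. Taken on its own terms, your spectral-domain proof is essentially correct and is the natural route. For the existence half, the definition $W^2(d\lambda)=\overline{f(\lambda)}\,W^1(d\lambda)+\sqrt{1-|f(\lambda)|^2}\,\widetilde W(d\lambda)$ indeed yields unit marginal spectral densities and cross-spectral density $f$, and you correctly isolate that \eqref{hermite} is precisely what makes $W^2$ Hermitian and hence $B^2$ real-valued. It is worth pointing out that the paper's proof of Proposition \ref{wllag-cfs} contains a time-domain realization of exactly this object: there $B^1$ and $B^2$ are built by applying the Fourier multipliers $\sqrt{|f|}g$, $\sqrt{|f|}$ and $\sqrt{1-|f|}$ (with $g=f/|f|$ on $\{f\neq0\}$) to $1_{(0,t]}$ and integrating against independent Wiener processes; this splits the coherence as $\sqrt{|f|}\cdot\overline{\sqrt{|f|}g}$ between a common factor and idiosyncratic factors with spectral weight $1-|f|$, whereas you use the split $1\cdot\overline{f}$ with idiosyncratic weight $1-|f|^2$ --- both give diagonal density $1$ and cross density $f$, so the constructions are interchangeable, and the paper's version has the side benefit of making $Y=(\tilde\gamma\text{-part of }B^1,\tilde\gamma\text{-part of }B^2)$ an independent pair with spectral density $1-|f|$, which is what the CFS argument needs. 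For the converse, your appeal to the matrix-valued spectral representation of stationary-increment processes, the elimination of the quadratic term $\Sigma$ from $E[(B^j_t)^2]=|t|$ and a growth comparison for $\Sigma_{12}$, and the bound $|F_{12}(A)|^2\leq F_{11}(A)F_{22}(A)$ forcing $F_{12}\ll\mathrm{Leb}$ with density of modulus at most $(2\pi)^{-1}$, is sound; the only refinement I would make is that in the converse the pair is not assumed jointly Gaussian (only condition \ref{hk-1} is assumed), so you should invoke the wide-sense (second-order) version of the spectral representation --- the argument is unchanged since only covariances enter.
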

The aim of this subsection is to give a sufficient condition which makes a market driven by the process $B$ described in the above proposition free of arbitrage under market frictions. More formally, let $f$ be a measurable function satisfying \eqref{L_infty}--\eqref{hermite} and $B_t=(B^1_t,B^2_t)$ ($t\in\mathbb{R}$) be a bivariate Gaussian process with stationary increments satisfying conditions \ref{hk-1}--\ref{hk-2}. We consider the market with two risky assets where the (discounted) log-price process of the $\nu$-th risky asset is given by
\begin{equation}\label{wllag-model}
X^\nu_t=A^\nu_t+\sigma_\nu B^\nu_t,\qquad t\in[0,T]
\end{equation}
with $\sigma_\nu>0$ and $A^\nu=(A^\nu_t)_{t\in[0,T]}$ is a continuous process for each $\nu=1,2$. Hence the (discounted) price process $S=(S_t)_{t\in[0,T]}$ of the risky assets is given by $S^\nu_t=\exp(X^\nu_t)$, $t\in[0,T]$ for each $\nu=1,2$. As the filtration $\mathbb{F}$, we take the usual augmentation of the natural filtration of the process $X=(X^1,X^2)$. %In this situation we can show that the process $X$ has CFS if the lead-lag structure is band-limited, i.e.~the cross-spectral density $f$ is compactly supported. 
\begin{proposition}\label{wllag-cfs}
Suppose that the process $A=(A^1,A^2)$ is independent of $B$. Suppose also that 
\begin{equation}\label{eq:gsvz}
\int_{\lambda_0}^\infty\frac{\log(1-|f(\lambda)|)}{\lambda^2}d\lambda>-\infty
\end{equation}
for some $\lambda_0>0$. Then the process $X$ has CFS with respect to $\mathbb{F}$. 
\end{proposition}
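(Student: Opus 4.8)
The plan is to strip the model \eqref{wllag-model} down to the bivariate Brownian motion $B=(B^1,B^2)$ by means of the reduction lemmas, and then to establish the conditional full support of $B$ by decomposing it --- through its spectral representation --- into an \emph{independent} sum, one summand of which is a pair of \emph{mutually independent} univariate Gaussian processes with stationary increments, to which the univariate result of \citet{GSvZ2011} (cf.\ Table \ref{table:cfs}) applies.

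The first step reduces matters to showing that $B$ has CFS with respect to $\mathbb{F}^{B}$. Since $\sigma_1,\sigma_2>0$, the map $\diag(\sigma_1,\sigma_2)$ is a linear homeomorphism of $\mathbb{R}^2$, so $(\sigma_1B^1,\sigma_2B^2)$ generates the same filtration as $B$, and from the conditional small ball characterization of Lemma \ref{lemma:csbp} (after rescaling the target function and the radius) one checks that $(\sigma_1B^1,\sigma_2B^2)$ has CFS with respect to $\mathbb{F}^{B}$ as soon as $B$ does. Granting this, since $A=(A^1,A^2)$ is independent of $B$, Lemma \ref{lemma:sum} applied with $X=(\sigma_1B^1,\sigma_2B^2)$ and $Y=A$ gives CFS of $X=(X^1,X^2)$ with respect to its natural filtration, and Lemma \ref{lemma:augmentation} upgrades this to CFS with respect to the usual augmentation $\mathbb{F}$.

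For the second step, recall from Proposition \ref{characterization} (conditions \ref{hk-1}--\ref{hk-2}) that $B$ is a continuous centered bivariate Gaussian process with stationary increments whose spectral density matrix of the increments is $M(\lambda)=(2\pi)^{-1}\left(\begin{smallmatrix}1 & f(\lambda)\\ \overline{f(\lambda)} & 1\end{smallmatrix}\right)$, with eigenvalues $(1\pm|f(\lambda)|)/(2\pi)\ge0$. I would split $M=M_0+M_1$, where $M_0(\lambda)=\tfrac{1-|f(\lambda)|}{2\pi}E_2$ and $M_1(\lambda)=\tfrac{1}{2\pi}\left(\begin{smallmatrix}|f(\lambda)| & f(\lambda)\\ \overline{f(\lambda)} & |f(\lambda)|\end{smallmatrix}\right)$ are both nonnegative definite a.e.\ (the eigenvalues of $M_1$ are $0$ and $|f(\lambda)|/\pi$), and use the spectral representation of processes with stationary increments to realize, on a suitable probability space, mutually independent continuous centered bivariate Gaussian processes $C=(C^1,C^2)$ and $D$ with stationary increments and increment spectral density matrices $M_0$ and $M_1$, respectively. (That $C$ and $D$ admit continuous versions follows because $M_0$ and $M_1$ are dominated, in the positive semidefinite order, by a constant multiple of $E_2$, so the canonical metric of each coordinate is dominated by that of a Brownian motion; this is where the two-sided Brownian property of $B^1,B^2$ enters.) Then $C+D$ has increment spectral density matrix $M$, hence the same law on $C([0,T],\mathbb{R}^2)$ as $B$, so by Lemma \ref{lemma:equivalence} it suffices that $C+D$ has CFS, and by Lemma \ref{lemma:sum} it is enough that $C$ has CFS with respect to $\mathbb{F}^{C}$. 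Finally, $M_0$ is diagonal, so $C^1$ and $C^2$ are independent, and combining this with Lemma \ref{lemma:csbp} and the factorization of conditional probabilities over $\mathcal{F}^{C^1}_{t_0}\vee\mathcal{F}^{C^2}_{t_0}$ under independence shows that $C$ has CFS with respect to $\mathbb{F}^{C}$ provided each $C^\nu$ has CFS with respect to $\mathbb{F}^{C^\nu}$.

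It remains to invoke the univariate input. Each $C^\nu$ is a continuous centered univariate Gaussian process with stationary increments and increment spectral density $\delta(\lambda)=(1-|f(\lambda)|)/(2\pi)\le(2\pi)^{-1}$, and by \eqref{hermite} $\delta$ is even; moreover, for the $\lambda_0$ of \eqref{eq:gsvz},
\[
\int_{|\lambda|\ge\lambda_0}\frac{\log\delta(\lambda)}{\lambda^{2}}\,d\lambda
=2\int_{\lambda_0}^{\infty}\frac{\log(1-|f(\lambda)|)}{\lambda^{2}}\,d\lambda-\frac{2\log(2\pi)}{\lambda_0}>-\infty .
\]
This is precisely the behaviour-at-infinity of the increment spectral density required by Theorem 2.1 of \citet{GSvZ2011}, so each $C^\nu$ has CFS with respect to $\mathbb{F}^{C^\nu}$, and tracing back through the reductions completes the proof. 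I expect the substantive point to be this last step: one must match the standing assumption \eqref{eq:gsvz} exactly with the hypotheses of the univariate CFS criterion for Gaussian processes with stationary increments, and in particular check that the low-frequency range $|\lambda|<\lambda_0$ --- on which $|f|$ may attain $1$ on a set of positive measure, forcing $\delta$ to vanish there --- does not obstruct CFS; this I would handle, if necessary, by peeling off the band-limited (hence real-analytic, in particular continuous) part of $C^\nu$ carried by $\{|\lambda|<\lambda_0\}$ as an independent summand and invoking Lemmas \ref{lemma:equivalence} and \ref{lemma:sum} once more. The spectral construction in the third paragraph is conceptually routine but needs some care in setting up the vector-valued random spectral measure and in verifying continuity of the independent summands; the remaining reductions are standard manipulations with Lemmas \ref{lemma:csbp}, \ref{lemma:augmentation}, \ref{lemma:sum} and \ref{lemma:equivalence}.
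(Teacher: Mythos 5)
Your proposal is correct and takes essentially the same route as the paper: the same reductions (Lemmas \ref{lemma:augmentation} and \ref{lemma:sum}, plus the factorization of the conditional small-ball probability over independent coordinates), the same splitting of the increment spectral density matrix into a diagonal part with density $1-|f|$ plus a singular, perfectly correlated part carrying $|f|$ and $f$, and the same final appeal to Theorem 2.1 of \cite{GSvZ2011}. The only differences are presentational: the paper realizes the two independent summands explicitly via Fourier-multiplier (convolution) operators with symbols $\sqrt{|f|}\,f/|f|$, $\sqrt{|f|}$ and $\sqrt{1-|f|}$ acting on three independent two-sided Wiener processes rather than via an abstract vector-valued spectral representation, and your closing worry about the low-frequency band is unnecessary, since the cited criterion constrains the spectral density only on $[\lambda_0,\infty)$.
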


\begin{rmk}
A sufficient condition for \eqref{eq:gsvz} is
\[
\limsup_{\lambda\to\infty}|f(\lambda)|<1.
\]
%Intuitively, $|f(\lambda)|$ could be understood as the correlation between $B^1$ and $B^2$ at the frequency $\lambda$, hence the above condition could be interpreted as requiring that two processes should not be perfectly correlated at high frequencies.  
In the language of signal processing, $|f(\lambda)|^2$ is called the coherence and used as a measure of frequency-wise relevance between $B^1$ and $B^2$. Therefore, the above condition could be interpreted as requiring that two processes should not be perfectly correlated at high frequencies. 
\end{rmk}

Now, analogously to the previous subsection, we obtain the following result:
\begin{theorem}
Under the assumptions of Proposition \ref{wllag-cfs}, the following statements hold true:
\begin{enumerate}[label=(\alph*)]

\item The market $S$ has no arbitrage in the Cheridito class.

\item The market $S$ has an $\varepsilon$-CPS for any $\varepsilon>0$. Consequently, $S$ has no arbitrage with $\varepsilon$-transaction costs in the class $\tilde{\mathcal{A}}^\varepsilon$ for any $\varepsilon>0$. 

\end{enumerate}
\end{theorem}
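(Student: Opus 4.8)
The plan is to derive the theorem as an immediate corollary of Proposition~\ref{wllag-cfs} by propagating the conditional full support property of $X$ through the implications collected in Section~\ref{sec:review}. All the substance is contained in Proposition~\ref{wllag-cfs}; what remains is to transport the CFS-implied properties from the log-price process $X=(X^1,X^2)$ to the price process $S=(S^1,S^2)$, where $S^\nu_t=\exp(X^\nu_t)$, and to verify the mild regularity hypotheses required by the cited results.

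First I would record the consequences of CFS. By Proposition~\ref{wllag-cfs}, under the stated assumptions $X$ has CFS with respect to $\mathbb{F}$, which is the usual augmentation of the natural filtration of $X$; in particular $(\Omega,\mathcal{F},\mathbb{F},P)$ satisfies the usual conditions imposed throughout Section~\ref{sec:review}. Proposition~\ref{prop:cfs} then shows that $X$ both satisfies the joint $\mathbb{F}$-CUD condition with respect to $\mathcal{L}(\mathbb{F})$ and is sticky with respect to $\mathbb{F}$. I would also note that $X$ is continuous, since each $X^\nu=A^\nu+\sigma_\nu B^\nu$ is the sum of a continuous process and a continuous Gaussian process; hence $S$ is continuous (a fortiori c\`adl\`ag and quasi-left continuous), and $S^\nu_t=\exp(X^\nu_t)>0$ for every $\nu$ and $t$.

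For part~(a): the functions $f_\nu=\exp:\mathbb{R}\to\mathbb{R}$, $\nu=1,2$, are strictly monotone, so Proposition~\ref{prop:cud-trans} transfers the joint $\mathbb{F}$-CUD condition from $X$ to $S=(f_1(X^1),f_2(X^2))$; Proposition~\ref{prop:sayit} then yields that $S$ has no arbitrage in the Cheridito class. For part~(b): the map $f:\mathbb{R}^2\to\mathbb{R}^2$ defined by $f(x^1,x^2)=(\exp(x^1),\exp(x^2))$ is continuous, so by Proposition~\ref{prop:sticky-trans} the stickiness of $X$ implies that $S$ is sticky; since $S$ is moreover continuous, Proposition~\ref{prop:cps} provides an $\varepsilon$-CPS for $S$ for every $\varepsilon>0$. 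Finally, since $S^\nu_t>0$ for all $\nu$ and $t$, combining the existence of an $\varepsilon$-CPS with Proposition~\ref{prop:arb-tc2} shows that $S$ has no arbitrage with $\varepsilon$-transaction costs in the class $\tilde{\mathcal{A}}^\varepsilon$ for every $\varepsilon>0$.

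There is no genuine obstacle at this stage: the only points deserving a line of verification are that $\mathbb{F}$ satisfies the standing assumptions of Section~\ref{sec:review} — automatic because it is a usual augmentation — and that $S$ is continuous and strictly positive — both immediate from $S^\nu=\exp(X^\nu)$ with $X^\nu$ continuous. The real work was already done in Proposition~\ref{wllag-cfs}, whose proof relies on the Gaussian CFS criterion behind \cite{GSvZ2011} via the spectral integrability condition \eqref{eq:gsvz}, together with the independence and law-equivalence reductions (Lemmas~\ref{lemma:sum} and \ref{lemma:equivalence}).
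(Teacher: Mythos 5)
Your argument is correct and is exactly the route the paper takes: the theorem is stated as an immediate consequence of Proposition \ref{wllag-cfs} combined with Propositions \ref{prop:sayit}--\ref{prop:sticky-trans} and \ref{prop:arb-tc2}--\ref{prop:cfs}, with the exponential map transferring the CUD condition (strict monotonicity) and stickiness (continuity) from $X$ to $S$, and continuity plus positivity of $S$ supplying the remaining hypotheses. Nothing is missing.
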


As a special case, we obtain the no arbitrage properties of the model considered in \cite{HK2016,HK2017}. In these papers, to take account of potential multi-scale structures of financial markets, the following from of the cross-spectral density has been considered:
\begin{equation*}%\label{csd}
f(\lambda)=\sum_{j=0}^\infty R_je^{-\sqrt{-1}\theta_j\lambda}1_{\Lambda_j}(\lambda),\qquad\lambda\in\mathbb{R},
\end{equation*}
where $R_j\in[-1,1]$, $\theta_j\in\mathbb{R}$ and $\Lambda_j=[-2^j\pi,-2^{j-1}\pi)\cup(2^{j-1}\pi,2^j\pi]$ for $j=0,1,\dots$. In this case, \eqref{eq:gsvz} is satisfied when $\limsup_{j\to\infty}|R_j|<1$.

\section{Proofs}\label{sec:proofs}

\subsection{Proof of Proposition \ref{hry-cfs}}

\if0
\begin{lemma}\label{wiener-support}
Let $W=(W_t)_{t\in[0,T]}$ be a (one-dimensional) Wiener process. Suppose that $\sigma\in L^2(0,T)$ satisfies $\mathrm{Leb}(\{t\in[0,T]:\sigma(t)\neq0\})=0$. Then we have
\[
P\left(\sup_{t\in[0,T]}\left|\int_0^t\sigma(s)dW_s-f(t)\right|<\varepsilon\right)>0
\]
for any $f\in C_0([0,T],\mathbb{R})$ and $\varepsilon>0$. 
\end{lemma}

\begin{proof}
Define the function $v:[0,T]\to[0,\infty)$ by $v(t)=\int_0^t\sigma(s)^2ds$, $t\in[0,T]$. By assumption $v$ is continuous and strictly increasing, hence the inverse function $v^{-1}:[0,v(T)]\to[0,T]$ is also continuous. In addition, the Dambis-Dubins-Schwarz theorem implies that there is a Brownian motion $B=(B_s)_{s\in[0,\infty)}$, possibly defined on an extension of the original probability space, such that $\int_0^t\sigma(s)dW_s=B_{v(t)}$ a.s.~for all $t\in[0,T]$. Without loss of generality we may assume that $B$ is defined on the original probability space. Then, by Chapter VIII, Corollary 2.3 of \cite{RY1999} we have
\[
P\left(\sup_{s\in[0,v(T)]}\left|B_s-f(v^{-1}(s))\right|<\varepsilon\right)>0,
\]
which yields the desired result. 
\end{proof}
\fi

First, by Lemma \ref{lemma:augmentation} it is enough to show that $X$ has CFS with respect to $\mathbb{F}^X$. 
Moreover, by Lemma \ref{lemma:sum} we may assume $A\equiv0$ without loss of generality. 
Also, thanks to Lemma \ref{lemma:equivalence}, for the proof we may consider a particular realization of the processes $B^1$ and $B^2$, and thus we consider the realization given by \eqref{hry-real}.

Define the bivariate processes $Y=(Y_t)_{t\in[0,T]}$ and $Z=(Z_t)_{t\in[0,T]}$ by
\[
Y_t=\left(
\begin{array}{c}
Y^1_t\\
Y^2_t
\end{array}\right)
=\left(
\begin{array}{c}
\int_0^{t}\sigma_1(u)\sqrt{1-|\rho(u)|}dW^2_u\\
\int_0^{t\wedge\theta}\sigma_2(u)dW^0_{u}+\int_0^{(t-\theta)_+}\sigma_2(u)\sqrt{1-|\rho(u)|}dW^3_u
\end{array}\right)
\]
and $Z_t=X_t-Y_t$ for $t\in[0,T]$. By construction $Y$ and $Z$ are independent. Therefore, by Lemma \ref{lemma:sum} it suffices to show that $Y$ has CFS with respect to $\mathbb{F}^Y$. 

Setting 
\[
\mathcal{G}_t=\sigma(W^2_u:u\leq t)\vee\sigma(W^0_{u\wedge\theta}:u\leq t)\vee\sigma(W^3_{(u-\theta)_+}:u\leq t),\qquad t\in[0,T],
\]
we have $\mathcal{F}^Y_t\subset\mathcal{G}_t$ for every $t\in[0,T]$. Therefore, by Lemmas \ref{lemma:csbp}--\ref{lemma:large} it is enough to show that
\[
P\left(\sup_{t\in[t_0,T]}\|Y_t-Y_{t_0}-f(t)\|<\varepsilon|\mathcal{G}_{t_0}\right)>0\qquad\text{a.s.}
\]
for any $t_0\in[0,T)$, $f\in C_0([t_0,T],\mathbb{R}^2)$ and $\varepsilon>0$. Since $(Y_t-Y_{t_0})_{t\in[t_0,T]}$ is independent of $\mathcal{G}_{t_0}$, this follows from 
\[
P\left(\sup_{t\in[t_0,T]}\|Y_t-Y_{t_0}-f(t)\|<\varepsilon\right)>0.
\]
Due to the independence between $Y^1$ and $Y^2$, we obtain this inequality once we show that
\begin{equation}\label{hry-aim}
P\left(\sup_{t\in[t_0,T]}|Y^\nu_t-Y^\nu_{t_0}-f_\nu(t)|<\varepsilon\right)>0
\end{equation}
for $\nu=1,2$, where $f_\nu$ denotes the $\nu$-th component function of $f$. Moreover, we can easily see that it is sufficient to consider the case $t_0=0$. Then, \eqref{hry-aim} for $\nu=1$ immediately follows from Lemma 3.1 of \cite{Pakkanen2010}. Moreover, if $\theta\geq T$, \eqref{hry-aim} for $\nu=2$ also follows from Lemma 3.1 of \cite{Pakkanen2010} because $Y^2_t=\int_0^{t}\sigma_2(u)dW^0_{u}$ for all $t\in[0,T]$. Otherwise, noting that
\begin{multline*}
P\left(\sup_{t\in[0,T]}|Y^2_t-Y^2_{0}-f_2(t)|<\varepsilon\right)
\geq P\left(\sup_{t\in[0,\theta]}\left|\int_0^{t}\sigma_2(u)dW^0_{u}-f_2(t)\right|<\frac{\varepsilon}{2}\right)\\
\times P\left(\sup_{t\in[0,T-\theta]}\left|\int_0^{t}\sigma_2(u)\sqrt{1-|\rho(u)|}dW^3_u-\{f_2(t+\theta)-f(\theta)\}\right|<\frac{\varepsilon}{2}\right),
\end{multline*}
we obtain \eqref{hry-aim} for $\nu=2$ again by Lemma 3.1 of \cite{Pakkanen2010}.\hfill$\Box$

\subsection{Proof of Proposition \ref{wllag-cfs}}

\if0
\begin{lemma}\label{multi-equivalence}
Let $X=(X_t)_{t\in[0,T]}$ and $Y=(Y_t)_{t\in[0,T]}$ be $d$-dimensional continuous stochastic processes. 
%For each $i=1,\dots,d$, we denote by $X^i=(X^i_t)_{t\in[0,T]}$ and $Y^i=(Y^i_t)_{t\in[0,T]}$ the $i$-th component processes of $X$ and $Y$, respectively. 
%Suppose that the following conditions hold true:
\if0
\begin{enumerate}[label=(\roman*)]

\item $X^1,\dots,X^d$ are mutually independent.

\item $Y^1,\dots,Y^d$ are mutually independent.

\end{enumerate}
\fi
Suppose that $X^1,\dots,X^d$ are independent as well as $Y^1,\dots,Y^d$ are independent. 
Then, the law of $X$ on $C([0,T],\mathbb{R}^d)$ is equivalent to the law of $Y$ on $C([0,T],\mathbb{R}^d)$ if and only if the law of $X^i$ on $C([0,T],\mathbb{R})$ is equivalent to the law of $Y^i$ on $C([0,T],\mathbb{R})$ for all $i=1,\dots,d$. 
\end{lemma}

\begin{proof}
For each $i=1,\dots,d$, we denote by $\pi_i$ the $i$-th coordinate function on $\mathbb{R}^d$, i.e.~$\pi_i(x)=x^i$ for $x=(x^1,\dots,x^d)\in\mathbb{R}^d$. Also, for a $k$-dimensional continuous process $Z=(Z_t)_{t\in[0,T]}$, we denote the law of $Z$ on $C([0,T],\mathbb{R}^k)$ by $P^Z$. 
Define the function $\Pi:C([0,T],\mathbb{R}^d)\to C([0,T],\mathbb{R})^d$ by 
\[
\Pi(f)=(\pi_1\circ f,\dots,\pi_d\circ f),\qquad f\in C([0,T],\mathbb{R}^d).
\]
We can easily check that $\Pi$ is a homeomorphism from $C([0,T],\mathbb{R}^d)$ to $C([0,T],\mathbb{R})^d$, where we equip the space $C([0,T],\mathbb{R})^d$ with the product topology. Since the space $C([0,T],\mathbb{R})$ is separable, we have $\mathcal{B}(C([0,T],\mathbb{R})^d)=\mathcal{B}(C([0,T],\mathbb{R}))^{\otimes d}$, hence we obtain
\begin{align*}
P(X\in F)&=P(\Pi(X)\in\Pi(F))=(P^{X^1}\times\cdots\times P^{X^d})(\Pi(F)),\\
P(Y\in F)&=P(\Pi(Y)\in\Pi(F))=(P^{Y^1}\times\cdots\times P^{Y^d})(\Pi(F))
\end{align*}
for all $F\in\mathcal{B}(C([0,T],\mathbb{R}^d))$ by assumption. Therefore, $P^X$ and $P^Y$ are equivalent if and only if $P^{X^1}\times\cdots\times P^{X^d}$ and $P^{Y^1}\times\cdots\times P^{Y^d}$ are equivalent. This completes the proof. 
\end{proof}
\fi

\if0
The next lemma is more or less known in the literature (see e.g.~Theorems 3.4 and 6.1 of \cite{Ito1954}). For the sake of readers' convenience we give a proof.
\begin{lemma}\label{si-proc}
Let $g:\mathbb{R}\to[0,\infty)$ be a bounded measurable even function. Then there is a (real-valued) continuous centered Gaussian process $G=(G_t)_{t\in\mathbb{R}}$ with stationary increments such that
\begin{equation}\label{si-spectral}
E[G_tG_s]=\frac{1}{2\pi}\int_{-\infty}^\infty\frac{(e^{-\sqrt{-1}t\lambda}-1)(e^{\sqrt{-1}s\lambda}-1)}{\lambda^2}g(\lambda)d\lambda
\end{equation}
for any $s,t\in\mathbb{R}$. 
\end{lemma}

\begin{proof}
For the proof we use some concepts on Schwartz's generalized functions and refer to Chapters 6--7 of \cite{Rudin1991} for details about them.

We denote by $\mathfrak{S}$ the set of all (complex-valued) rapidly decreasing functions on $\mathbb{R}$. 
Since $\sqrt{g}$ is bounded, it can be regarded as a tempered generalized function. Let $\gamma$ be the inverse Fourier transform of $\sqrt{g}$ as a tempered generalized function. Since $\sqrt{g}$ is even, $\gamma(u)$ is real-valued if $u\in\mathfrak{S}$ is real-valued. Moreover, for any $u\in\mathfrak{S}$ we have $\widehat{\gamma*u}=\sqrt{g}\hat{u}\in L^2(\mathbb{R})$, hence the Plancherel theorem implies that $\gamma*u\in L^2(\mathbb{R})$ and $\|\gamma*u\|_{L^2(\mathbb{R})}\leq \|\sqrt{g}\|_{L^\infty(\mathbb{R})}\|u\|_{L^2(\mathbb{R})}$. Therefore, there is a (unique) continuous linear function $\Gamma:L^2(\mathbb{R})\to L^2(\mathbb{R})$ such that $\Gamma(u)=\gamma*u$ for every $u\in\mathfrak{S}$. By continuity $\Gamma(u)$ is real-valued if $u\in L^2(\mathbb{R})$ is real-valued and $\widehat{\Gamma(u)}=\sqrt{g}\hat{u}$ for any $u\in L^2(\mathbb{R})$. Now, let $W=(W_t)_{t\in\mathbb{R}}$ be a two-sided standard Wiener process. Then we define the process $G=(G_t)_{t\in\mathbb{R}}$ by
\[
G_t=
\left\{
\begin{array}{ll}
\int_{-\infty}^\infty\Gamma(1_{(0,t]})(s)dW_s  & \text{if }t\geq 0,  \\
-\int_{-\infty}^\infty\Gamma(1_{(t,0]})(s)dW_s  & \text{if }t< 0.   
\end{array}
\right.
\]
By construction $G$ is a real-valued centered Gaussian process. Moreover, the Plancherel theorem yields \eqref{si-spectral}. In particular, we have $E[|G_t-G_s|^4]=3(E[|G_t-G_s|^2])^2\leq\|g\|_{L^\infty(\mathbb{R})}^2|t-s|^2$ for any $s,t\in\mathbb{R}$. Therefore, $G$ has a continuous version by the Kolmogorov lemma (e.g.~Theorem 73 from Chapter IV of \cite{Protter2004}). Now, taking the continuous version of $G$, we complete the proof.
\end{proof}
\fi

\if0
\begin{proof}[\upshape\bfseries Proof of Proposition \ref{wllag-cfs}]
We begin by constructing a realization of $B$ which is suitable to our purpose. 
%First, by Lemma \ref{si-proc} there is a continuous centered Gaussian process $G^1$ with stationary increments such that the spectral density of $G^1$ is $1_{\mathbb{R}\setminus\Lambda}$, where $\Lambda=[-\lambda_0,\lambda_0]$. 
Set $\Lambda=[-\lambda_0,\lambda_0]$. By Theorem 3.1 of \cite{Dudley1973}, there is a centered Gaussian process $G^1=(G^1_t)_{t\in\mathbb{R}}$ with stationary increments such that
\begin{equation*}%\label{si-spectral}
E[G^1_tG^1_s]=\frac{1}{2\pi}\int_{\mathbb{R}\setminus\Lambda}\frac{(e^{-\sqrt{-1}t\lambda}-1)(e^{\sqrt{-1}s\lambda}-1)}{\lambda^2}d\lambda
\end{equation*}
for any $s,t\in\mathbb{R}$. Since we have $E[|G_t-G_s|^4]=3(E[|G_t-G_s|^2])^2\leq|t-s|^2$ for any $s,t\in\mathbb{R}$, $G^1$ has a continuous version by the Kolmogorov lemma (e.g.~Theorem 73 from Chapter IV of \cite{Protter2004}) and thus we may assume that $G^1$ is continuous without loss of generality. We also let $G^2$ be an independent copy of $G^1$. 
Next, let $W^1=(W^1_t)_{t\in\mathbb{R}}$ and $W^2=(W^2_t)_{t\in\mathbb{R}}$ be mutually independent two-sided standard Wiener processes, which are independent of $(G^1,G^2)$. Also, noting that $1_\Lambda,f1_\Lambda,\sqrt{1-|f|^2}1_\Lambda\in L^2(\mathbb{R})$, let $\psi_1,\psi_2,\psi_3$ be the inverse Fourier transforms of $1_\Lambda,f1_\Lambda,\sqrt{1-|f|^2}1_\Lambda$, respectively. Since $\overline{f(\lambda)}=f(-\lambda)$ and $1_\Lambda(-\lambda)=1_\Lambda(\lambda)$ for all $\lambda\in\mathbb{R}$, $\psi_1,\psi_2,\psi_3$ are real-valued. Moreover, since $\psi_1,\psi_2,\psi_3\in L^2(\mathbb{R})$, we can define the (real-valued) processes $G^3=(G^3_t)_{t\in\mathbb{R}}$ and $G^4=(G^4_t)_{t\in\mathbb{R}}$ by
\begin{align*}
G^3_t&=\int_0^t\left(\int_{-\infty}^\infty\psi_1(s-u)dW^1_s\right)du,\\
G^4_t&=\int_0^t\left(\int_{-\infty}^\infty\psi_2(s-u)dW^1_s\right)du
+\int_0^t\left(\int_{-\infty}^\infty\psi_3(s-u)dW^2_s\right)du
\end{align*}
for $t\in\mathbb{R}$. The processes $G^3$ and $G^4$ are evidently continuous and Gaussian. Now let us set $B^1=G^1+G^3$ and $B^2=G^2+G^4$. We will show that these $B^1$ and $B^2$ satisfy the desired conditions. In fact, the process $(B^1,B^2)$ is evidently continuous and Gaussian. Moreover, noting that Theorem 3.1 of \cite{BNBO2011} yields
\begin{align*}
G^3_t&=\int_{-\infty}^\infty\left(\int_0^t\psi_1(s-u)du\right)dW^1_s\quad\text{a.s.},\\
G^4_t&=\int_{-\infty}^\infty\left(\int_0^t\psi_2(s-u)du\right)dW^1_s
+\int_{-\infty}^\infty\left(\int_0^t\psi_3(s-u)du\right)dW^2_s\quad\text{a.s.}
\end{align*}
for any $t\in\mathbb{R}$ because $\int_{-\infty}^\infty\psi_k(s-u)^2ds\leq2\lambda_0$ for all $u\in\mathbb{R}$ and $k=1,2,3$, we have
\begin{align*}
&E[B^1_tB^1_s]=E[B^2_tB^2_s]=(|t|\wedge|s|)1_{\{ts>0\}},\\
&E[B^1_tB^2_s]=\frac{1}{2\pi}\int_{-\infty}^\infty\frac{(e^{-\sqrt{-1}t\lambda}-1)(e^{\sqrt{-1}s\lambda}-1)}{\lambda^2}f(\lambda)d\lambda
\end{align*}
for any $s,t\in\mathbb{R}$ by the Plancherel and convolution theorems. These facts mean that both $B^1$ and $B^2$ are two-sided standard Brownian motions as well as $B$ is of stationary increments and has the cross-spectral density $f$. 

We turn to the main body of the proof. Define the bivariate processes $Y=(Y_t)_{t\geq0}$ and $Z=(Z_t)_{t\geq0}$ by $Y_t=(\sigma_1G^1_t,\sigma_2G^2_t)^\top$ and $Z_t=(\sigma_1G^3_t,\sigma_2G^4_t)^\top$ for $t\geq0$. By definition we have $X=Y+Z$. Moreover, by construction $Y$ and $Z$ are continuous and mutually independent. Therefore, by Lemma \ref{lemma:sum} it suffices to show that $Y$ has CFS with respect to $\mathbb{F}^Y$. Lemma \ref{multi-equivalence} and Theorem A.1 of \cite{GSvZ2011} imply that the law of $Y$ on $C([0,T],\mathbb{R}^2)$ is equivalent to that of the process $\tilde{Y}_t=(\sigma_1\tilde{W}^1_t,\sigma_2\tilde{W}^2_t)^\top$, $t\in[0,T]$, where $\tilde{W}_t=(\tilde{W}^1_t,\tilde{W}^2_t)^\top$, $t\in[0,T]$, is a two-dimensional standard Wiener process. The process $\tilde{Y}=(\tilde{Y}_t)_{t\in[0,T]}$ has CFS with respect to $\mathbb{F}^{\tilde{Y}}$ due to the independence of increments of $\tilde{Y}$ and Chapter VIII, Corollary 2.3 of \cite{RY1999}. Hence Lemma \ref{lemma:equivalence} yields the desired result.
\end{proof}
\fi

\begin{lemma}\label{lemma:cfs-independent}
Let $Y=(Y_t)_{t\in[0,T]}$ be a $d$-dimensional continuous stochastic process such that $Y^1,\dots,Y^d$ are independent. Then, $Y$ has CFS with respect to $\mathbb{F}^Y$ if and only if $Y^\nu$ has CFS with respect to $\mathbb{F}^{Y^\nu}$ for all $\nu=1,\dots,d$.
\end{lemma}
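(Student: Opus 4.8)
The plan is to characterize CFS via the conditional small ball property of Lemma \ref{lemma:csbp} and then exploit the product structure coming from independence.

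\medskip

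\textbf{Forward direction.} Suppose $Y$ has CFS with respect to $\mathbb{F}^Y$. Fix $\nu$, a time $t_0\in[0,T)$, a function $g\in C_0([t_0,T],\mathbb{R})$ and $\varepsilon>0$. I would build a function $f\in C_0([t_0,T],\mathbb{R}^d)$ whose $\nu$-th component equals $g$ and whose other components are identically $0$, and then apply Lemma \ref{lemma:csbp}(ii) for $Y$: since $\|Y_t-Y_{t_0}-f(t)\|<\varepsilon$ implies in particular $|Y^\nu_t-Y^\nu_{t_0}-g(t)|<\varepsilon$ for all $t$, we get
\[
0<P\left(\sup_{t\in[t_0,T]}\|Y_t-Y_{t_0}-f(t)\|<\varepsilon\,\Big|\,\mathcal{F}^Y_{t_0}\right)\le P\left(\sup_{t\in[t_0,T]}|Y^\nu_t-Y^\nu_{t_0}-g(t)|<\varepsilon\,\Big|\,\mathcal{F}^Y_{t_0}\right)\quad\text{a.s.}
\]
Since $\mathcal{F}^{Y^\nu}_{t_0}\subset\mathcal{F}^Y_{t_0}$, taking conditional expectation with respect to $\mathcal{F}^{Y^\nu}_{t_0}$ and using that a non-negative random variable with positive conditional expectation given a coarser $\sigma$-field is a.s.\ positive there, we obtain Lemma \ref{lemma:csbp}(ii) for $Y^\nu$, hence $Y^\nu$ has CFS. (Alternatively, this direction follows directly from Lemma \ref{lemma:large} applied to $\mathbb{F}=\mathbb{F}^{Y^\nu}\subset\mathbb{F}^Y=\mathbb{G}$ restricted along the coordinate map, but the small-ball argument is cleanest.)

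\medskip

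\textbf{Reverse direction.} Suppose each $Y^\nu$ has CFS with respect to $\mathbb{F}^{Y^\nu}$. Fix $t_0$, $f=(f_1,\dots,f_d)\in C_0([t_0,T],\mathbb{R}^d)$ and $\varepsilon>0$. The key observation is that, by independence of $Y^1,\dots,Y^d$, the $\sigma$-field $\mathcal{F}^Y_{t_0}$ is (up to null sets) $\bigvee_{\nu=1}^d\mathcal{F}^{Y^\nu}_{t_0}$, the post-$t_0$ increment process $(Y^\nu_t-Y^\nu_{t_0})_{t\in[t_0,T]}$ is conditionally on $\mathcal{F}^Y_{t_0}$ still mutually independent across $\nu$, and moreover for each $\nu$ this increment process is conditionally independent of $\mathcal{F}^{Y^\mu}_{t_0}$ for $\mu\neq\nu$; hence a conditional version of Fubini gives
\[
P\left(\sup_{t\in[t_0,T]}\|Y_t-Y_{t_0}-f(t)\|<\varepsilon\,\Big|\,\mathcal{F}^Y_{t_0}\right)
\ge\prod_{\nu=1}^d P\left(\sup_{t\in[t_0,T]}|Y^\nu_t-Y^\nu_{t_0}-f_\nu(t)|<\frac{\varepsilon}{\sqrt d}\,\Big|\,\mathcal{F}^{Y^\nu}_{t_0}\right)\quad\text{a.s.},
\]
where the lower bound on the left uses that $\|x\|^2\le d\max_\nu (x^\nu)^2$. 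Each factor on the right is a.s.\ positive by Lemma \ref{lemma:csbp}(ii) applied to $Y^\nu$, so the product is a.s.\ positive, and Lemma \ref{lemma:csbp} again gives that $Y$ has CFS.

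\medskip

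The main obstacle is making the conditional factorization rigorous: one must verify that conditioning on $\mathcal{F}^Y_{t_0}$ the $d$ blocks $\big((Y^\nu_t-Y^\nu_{t_0})_{t\ge t_0}\big)_{\nu}$ are independent, and that the $\nu$-th block's conditional law given $\mathcal{F}^Y_{t_0}$ coincides with its conditional law given only $\mathcal{F}^{Y^\nu}_{t_0}$. Both follow from the unconditional independence of the coordinate processes together with the fact that $\mathcal{F}^Y_{t_0}=\bigvee_\nu\mathcal{F}^{Y^\nu}_{t_0}$, but to avoid regular-conditional-probability subtleties I would phrase the whole argument integrally: multiply the per-coordinate small-ball events, integrate against an arbitrary $\mathcal{F}^Y_{t_0}$-measurable bounded test function, split the expectation as a product over $\nu$ using independence, and thereby identify the left-hand conditional probability from below by the claimed product $P$-a.s. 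Everything else is the routine translation between the CFS statement and the conditional small-ball statement via Lemma \ref{lemma:csbp}.
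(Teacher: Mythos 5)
Your proof is correct and follows essentially the same route as the paper: the easy direction is the projection-plus-tower-property argument (which the paper dismisses as obvious), and the substantive direction is exactly the paper's factorization of the joint conditional small-ball probability into the product $\prod_{\nu}P\bigl(\sup_{t}|Y^\nu_t-Y^\nu_{t_0}-f_\nu(t)|<\varepsilon/c\,\big|\,\mathcal{F}^{Y^\nu}_{t_0}\bigr)$ via independence (the paper obtains it by iterated conditioning with $c=d$; your $c=\sqrt d$ and monotone-class justification are equivalent). No gaps.
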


\begin{proof}
The ``if'' part is obvious, hence we prove the ``only if'' part. By Lemma \ref{lemma:csbp} it suffices to prove
\begin{equation}\label{aim:cfs-independent}
P\left(\sup_{t\in[t_0,T]}\|Y_t-Y_{t_0}-f(t)\|<\varepsilon|\mathcal{F}^Y_{t_0}\right)>0\qquad\text{a.s.}
\end{equation}
for any $t_0\in[0,T)$, $f\in C_0([t_0,T],\mathbb{R}^d)$ and $\varepsilon>0$. For each $\nu=1,\dots,d$, we denote by $f_\nu$ the $\nu$-th coordinate function of $f$. Also, $\mathcal{F}^\nu$ denotes the $\sigma$-field generated by the process $Y^\nu$. Since $Y^1,\dots,Y^d$ are independent, we have
\begin{align*}
&P\left(\sup_{t\in[t_0,T]}\|Y_t-Y_{t_0}-f(t)\|<\varepsilon|\mathcal{F}^Y_{t_0}\right)
\geq P\left(\bigcap_{\nu=1}^d\left\{\sup_{t\in[t_0,T]}|Y^\nu_t-Y^\nu_{t_0}-f_\nu(t)|<\frac{\varepsilon}{d}\right\}|\mathcal{F}^Y_{t_0}\right)\\
&=E\left[P\left(\bigcap_{\nu=2}^d\left\{\sup_{t\in[t_0,T]}|Y^\nu_t-Y^\nu_{t_0}-f_\nu(t)|<\frac{\varepsilon}{d}\right\}|\mathcal{F}^Y_{t_0}\vee\mathcal{F}^1\right)1_{\left\{\sup_{t\in[t_0,T]}|Y^1_t-Y^1_{t_0}-f_1(t)|<\frac{\varepsilon}{d}\right\}}|\mathcal{F}^Y_{t_0}\right]\\
&=P\left(\bigcap_{\nu=2}^d\left\{\sup_{t\in[t_0,T]}|Y^\nu_t-Y^\nu_{t_0}-f_\nu(t)|<\frac{\varepsilon}{d}\right\}|\bigvee_{\nu=2}^d\mathcal{F}^{Y^\nu}_{t_0}\right)P\left(\sup_{t\in[t_0,T]}|Y^1_t-Y^1_{t_0}-f_1(t)|<\frac{\varepsilon}{d}|\mathcal{F}^{Y^1}_{t_0}\right)\\
&=\cdots=\prod_{\nu=1}^dP\left(\sup_{t\in[t_0,T]}|Y^\nu_t-Y^\nu_{t_0}-f_\nu(t)|<\frac{\varepsilon}{d}|\mathcal{F}^{Y^\nu}_{t_0}\right),
\end{align*}
hence the CFS property of $Y^\nu$ with respect to $\mathbb{F}^{Y^\nu}$ for every $\nu=1,\dots,d$ yields \eqref{aim:cfs-independent}.
\end{proof}

\begin{proof}[\upshape\bfseries Proof of Proposition \ref{wllag-cfs}]

By Lemma \ref{lemma:augmentation} it is enough to show that $X$ has CFS with respect to $\mathbb{F}^X$. 
Moreover, by Lemma \ref{lemma:sum} we may assume $A\equiv0$ without loss of generality. 

We start with constructing a realization of $B$ which is suitable to our purpose. To accomplish this, we use some concepts on Schwartz's generalized functions. We refer to Chapters 6--7 of \cite{Rudin1991} for details about them. First, define the function $g:\mathbb{R}\to\mathbb{C}$ by
\[
g(\lambda)=\left\{
\begin{array}{ll}
f(\lambda)/|f(\lambda)| & \text{if }|f(\lambda)|\neq 0,\\
0 & \text{otherwise}.
\end{array}\right.
\]
$g$ is evidently measurable. 
Next, let us denote by $\mathfrak{S}$ the set of all (complex-valued) rapidly decreasing functions on $\mathbb{R}$. Also, $L^2(\mathbb{R})$ denotes the space of all \textit{complex-valued} square-integrable functions. For a function $u\in L^2(\mathbb{R})$, $\hat{u}$ and $\check{u}$ denote the Fourier transform and the inverse Fourier transform of $u$, respectively. Here, when $u$ is integrable, $\hat{u}$ is given by $\hat{u}(\lambda)=\int_{-\infty}^\infty u(t)e^{-\sqrt{-1}\lambda t}dt$, $\lambda\in\mathbb{R}$. 
Then, we define the function $\alpha:\mathfrak{S}\to\mathbb{C}$ by $\alpha(u)=\int_{-\infty}^\infty\check{u}(\lambda)\sqrt{|f(\lambda)|}g(\lambda)d\lambda$ for $u\in\mathfrak{S}$, which can be defined thanks to \eqref{L_infty}. $\alpha$ is obviously a tempered generalized function on $\mathbb{R}$. Moreover, if $u\in\mathfrak{S}$ is real-valued, then $\alpha(u)\in\mathbb{R}$. In fact, we have
\begin{align*}
\overline{\alpha(u)}=\int_{-\infty}^\infty\overline{\check{u}(\lambda)}\sqrt{|f(\lambda)|}\cdot\overline{g(\lambda)}d\lambda
=\int_{-\infty}^\infty\check{u}(-\lambda)\sqrt{|f(-\lambda)|}g(-\lambda)d\lambda
=\alpha(u)
\end{align*}
by \eqref{hermite}. Now, for any $u\in\mathfrak{S}$ we have $\widehat{\alpha*u}=\widehat{u}\widehat{\alpha}=\widehat{u}\sqrt{|f|}g$ in $\mathfrak{S}^*$, hence $\widehat{\alpha*u}\in L^2(\mathbb{R})$. Therefore, $\alpha*u\in L^2(\mathbb{R})$ and $\|\alpha*u\|_{L^2(\mathbb{R})}=(2\pi)^{-1}\|\widehat{u}\sqrt{|f|}g\|_{L^2(\mathbb{R})}\leq\|u\|_{L^2(\mathbb{R})}$ by the Parseval identity and \eqref{L_infty}. Hence, there is a (unique) continuous function $\tilde{\alpha}:L^2(\mathbb{R})\to L^2(\mathbb{R})$ such that $\tilde{\alpha}(u)=\alpha*u$ for any $u\in\mathfrak{S}$. By continuity $\tilde{\alpha}(u)$ is real-valued as long as so is $u\in L^2(\mathbb{R})$. We also define the tempered generalized functions $\beta,\gamma$ on $\mathbb{R}$ by setting $\beta(u)=\int_{-\infty}^\infty\check{u}(\lambda)\sqrt{|f(\lambda)|}d\lambda$ and $\gamma(u)=\int_{-\infty}^\infty\check{u}(\lambda)\sqrt{1-|f(\lambda)|}d\lambda$ for $u\in\mathfrak{S}$. Then, an analogous argument to the above implies that there are continuous functions $\tilde{\beta}:L^2(\mathbb{R})\to L^2(\mathbb{R})$ and $\tilde{\gamma}:L^2(\mathbb{R})\to L^2(\mathbb{R})$ such that $\tilde{\beta}(u)=\beta*u$ and $\tilde{\gamma}(u)=\gamma*u$ for any $u\in\mathfrak{S}$ and that both $\tilde{\beta}(u)$ and $\tilde{\gamma}(u)$ are real-valued as long as so is $u\in L^2(\mathbb{R})$.

Now let $(W^k_t)_{t\in\mathbb{R}}$ ($k=1,2,3$) be three independent two-sided standard Brownian motions. Then we define the processes $(B^1_t)_{t\in\mathbb{R}}$ and $(B^2_t)_{t\in\mathbb{R}}$ by 
\[
B^1_t=
\left\{
\begin{array}{ll}
\int_{-\infty}^\infty\tilde{\alpha}(1_{(0,t]})(s)dW^1_s+\int_{-\infty}^\infty\tilde{\gamma}(1_{(0,t]})(s)dW^2_s  & \text{if }t\geq0,  \\
-\int_{-\infty}^\infty\tilde{\alpha}(1_{(0,t]})(s)dW^1_s-\int_{-\infty}^\infty\tilde{\gamma}(1_{(t,0]})(s)dW^2_s  & \text{otherwise}, 
\end{array}
\right.
\]
and
\[
B^2_t=
\left\{
\begin{array}{ll}
\int_{-\infty}^\infty\tilde{\beta}(1_{(0,t]})(s)dW^1_s+\int_{-\infty}^\infty\tilde{\gamma}(1_{(0,t]})(s)dW^3_s  & \text{if }t\geq0,  \\
-\int_{-\infty}^\infty\tilde{\beta}(1_{(0,t]})(s)dW^1_s-\int_{-\infty}^\infty\tilde{\gamma}(1_{(t,0]})(s)dW^3_s  & \text{otherwise}. 
\end{array}
\right.
\]
We show that this process $B_t=(B^1_t,B^2_t)$ $(t\in\mathbb{R})$ is a realization of the desired process. First, it is evident that both $B^1$ and $B^2$ are real-valued and Gaussian. Moreover, since it holds that $\hat{\tilde{\alpha}}(u)=\hat{u}\sqrt{|f|}g$, $\hat{\tilde{\beta}}(u)=\hat{u}\sqrt{|f|}$ and $\hat{\tilde{\gamma}}(u)=\hat{u}\sqrt{1-|f|}$ in $L^2(\mathbb{R})$ for any $u\in L^2(\mathbb{R})$, $\nu=1,2$ and $t,s\in\mathbb{R}$, we have $E[B^\nu_tB^\nu_s]=|t|\wedge|s|$ if $ts\geq0$ and $E[B^\nu_tB^\nu_s]=0$ otherwise by the Parseval identity. In particular, thanks to the Kolmogorov continuity theorem, we may assume that the process $B^\nu$ is continuous. Consequently, $B^1$ and $B^2$ satisfy condition (i). Condition (ii) also follows from the Parseval identity. This especially implies that the bivariate process $B_t=(B_t^1,B_t^2)$ is of stationary increments. Hence the process $B$ is turned out to be a realization of the desired process. 

We turn to the main body of the proof. Let us define the bivariate processes $Y=(Y_t)_{t\in[0,T]}$ and $Z=(Z_t)_{t\in[0,T]}$ by
\[
Y_t=\left(
\begin{array}{c}
Y^1_t\\
Y^2_t
\end{array}\right)
=\left(
\begin{array}{c}
\int_{-\infty}^\infty\tilde{\gamma}(1_{(0,t]})(s)dW^2_s\\
\int_{-\infty}^\infty\tilde{\gamma}(1_{(0,t]})(s)dW^3_s
\end{array}\right)
\]
and $Z_t=X_t-Y_t$ for $t\in[0,T]$. By construction $Y$ and $Z$ are independent. Moreover, we can easily check that, for each $\nu=1,2$, $Y^\nu$ is Gaussian and of stationary increments with spectral density $1-|f|$. In particular, thanks to the Kolmogorov continuity theorem, we may assume that the process $Y$ is continuous. Hence the process $Z$ is continuous as well. Therefore, by Lemma \ref{lemma:sum} it suffices to show that $Y$ has CFS with respect to $\mathbb{F}^Y$. 

By construction $Y^1$ and $Y^2$ are independent. Therefore, by Lemma \ref{lemma:cfs-independent} it is enough to show that $Y^\nu$ has CFS with respect to $\mathbb{F}^{Y^\nu}$ for each $\nu=1,2$. Since $Y^\nu$ is a continuous Gaussian process with stationary increments whose spectral density is $1-|f|$, the desired result follows from Theorem 2.1 of \cite{GSvZ2011}. This completes the proof.
\end{proof}

\if0
\subsection{Proof of Theorem \ref{wllag-simple}}

\begin{lemma}
If $\phi:\Omega\to\mathbb{R}$ is $\mathcal{F}^X_\tau$-measurable, there is a Borel function $F:C([0,T],\mathbb{R}^d)\to\mathbb{R}^d$ such that $\phi=F((X_{t\wedge\tau})_{t\in[0,T]})$.
\end{lemma}

\begin{proof}
It is enough to consider the case that $\phi=1_A$ for some $A\in\mathcal{F}^X_\tau$. Since $A\in\mathcal{F}^X_T$, there is a Borel function $F:C([0,T],\mathbb{R}^d)\to\mathbb{R}^d$ such that $1_A=F((X_{t})_{t\in[0,T]})$. By Chapter I, Lemma 2.43 of 
\end{proof}

\begin{proof}[\bfseries\upshape Proof of Theorem \ref{wllag-simple}]

\begin{align*}
\psi(\tilde{Y}^k_t+Z^k_t)-\psi(\tilde{Y}^k_0+Z^k_0)
=\int_0^t\psi'(\tilde{Y}^k_s+Z^k_s)d\tilde{Y}^k_s
+\int_0^t\psi'(\tilde{Y}^k_s+Z^k_s)dZ^k_s
+\frac{\sigma_k^2}{2}\int_0^t\psi''(\tilde{Y}^k_s+Z^k_s)ds
\end{align*}

\end{proof}
\fi

\section*{Acknowledgments}

We would like to thank Dacheng Xiu for asking us about the no arbitrage properties of models with lead-lag relationships. 
%Takaki Hayashi's research was supported by JSPS Grant-in-Aid for Scientific Research (C) Grant Number JP16K03601. Yuta Koike's research was supported by JST, CREST and JSPS Grant-in-Aid for Young Scientists (B) Grant Number JP16K17105.
This work was supported by JSPS KAKENHI Grant Numbers JP16K03601, JP16K17105, JP17H01100.

% \section{reference}
{\small
\renewcommand*{\baselinestretch}{1}\selectfont
\addcontentsline{toc}{section}{References}
%\bibliography{arbitrage}

\begin{thebibliography}{51}
\expandafter\ifx\csname natexlab\endcsname\relax\def\natexlab#1{#1}\fi

\bibitem[{Alsayed and McGroarty(2014)}]{AM2014}
Alsayed, H. and McGroarty, F. (2014).
\newblock Ultra-high-frequency algorithmic arbitrage across international index
  futures.
\newblock \emph{J. Forecast.} \textbf{33}, 391--408.

\bibitem[{Bender(2012)}]{Bender2012}
Bender, C. (2012).
\newblock Simple arbitrage.
\newblock \emph{Ann. Appl. Probab.} \textbf{22}, 2067--2085.

\bibitem[{Bender \emph{et~al.}(2015)Bender, Pakkanen and Sayit}]{BPS2015}
Bender, C., Pakkanen, M.~S. and Sayit, H. (2015).
\newblock Sticky continuous processes have consistent price systems.
\newblock \emph{J. Appl. Probab.} \textbf{52}, 586--594.

\bibitem[{Bender \emph{et~al.}(2011)Bender, Sottinen and Valkeila}]{BSV2011}
Bender, C., Sottinen, T. and Valkeila, E. (2011).
\newblock Fractional processes as models in stochastic finance.
\newblock In G.~Di~Nunno and B.~{\O}ksendal, eds., \emph{Advanced mathematical
  methods for finance}, chap.~3. Springer, pp. 75--103.

\bibitem[{Bollen \emph{et~al.}(2017)Bollen, O'Neill and Whaley}]{BOW2017}
Bollen, N.~P., O'Neill, M.~J. and Whaley, R.~E. (2017).
\newblock Tail wags dog: Intraday price discovery in {VIX} markets.
\newblock \emph{Journal of Futures Markets} \textbf{37}, 431--451.

\bibitem[{Buccheri \emph{et~al.}(2017)Buccheri, Corsi and Peluso}]{BCP2017}
Buccheri, G., Corsi, F. and Peluso, S. (2017).
\newblock Hidden leaders: Identifying high-frequency lead-lag structures in a
  multivariate price formation framework.
\newblock Working paper. Available at SSRN: https://ssrn.com/abstract=2938619.

\bibitem[{Ceron \emph{et~al.}(2016)Ceron, Curini and Iacus}]{CCI2016}
Ceron, A., Curini, L. and Iacus, S.~M. (2016).
\newblock First- and second-level agenda setting in the {T}wittersphere: An
  application to the {I}talian political debate.
\newblock \emph{Journal of Information Technology \& Politics} \textbf{13},
  159--174.

\bibitem[{Chan(1992)}]{Chan1992}
Chan, K. (1992).
\newblock A further analysis of the lead--lag relationship between the cash
  market and stock index futures market.
\newblock \emph{Review of Financial Studies} \textbf{5}, 123--152.

\bibitem[{Cheridito(2003)}]{Cheridito2003}
Cheridito, P. (2003).
\newblock Arbitrage in fractional {B}rownian motion models.
\newblock \emph{Finance Stoch.} \textbf{7}, 533--553.

\bibitem[{Cherny(2008)}]{Cherny2008}
Cherny, A. (2008).
\newblock Brownian moving averages have conditional full support.
\newblock \emph{Ann. Appl. Probab.} \textbf{18}, 1825--1830.

\bibitem[{Chiba(2017)}]{Chiba2017}
Chiba, K. (2017).
\newblock Estimation of the lead-lag parameter between two stochastic processes
  driven by a fractional {B}rownian motion.
\newblock Working paper. Available at arXiv: https://arxiv.org/abs/1705.10466.

\bibitem[{Chordia \emph{et~al.}(2011)Chordia, Sarkar and
  Subrahmanyam}]{CSS2011}
Chordia, T., Sarkar, A. and Subrahmanyam, A. (2011).
\newblock Liquidity dynamics and cross-autocorrelations.
\newblock \emph{Journal of Financial and Quantitative Analysis} \textbf{46},
  709--736.

\bibitem[{de~Jong and Donders(1998)}]{deJD1998}
de~Jong, F. and Donders, M. W.~M. (1998).
\newblock Intraday lead-lag relationships between the futures-, options and
  stock market.
\newblock \emph{European Finance Review} \textbf{1}, 337--359.

\bibitem[{de~Jong and Nijman(1997)}]{deJN1997}
de~Jong, F. and Nijman, T. (1997).
\newblock High frequency analysis of lead-lag relationships between financial
  markets.
\newblock \emph{Journal of Empirical Finance} \textbf{4}, 259--277.

\bibitem[{Delbaen and Schachermayer(1994)}]{DS1994}
Delbaen, F. and Schachermayer, W. (1994).
\newblock A general version of the fundamental theorem of asset pricing.
\newblock \emph{Math. Ann.} \textbf{300}, 463--520.

\bibitem[{Epps(1979)}]{Epps1979}
Epps, T.~W. (1979).
\newblock Comovements in stock prices in the very short run.
\newblock \emph{J. Amer. Statist. Assoc.} \textbf{74}, 291--298.

\bibitem[{Gasbarra \emph{et~al.}(2011)Gasbarra, Sottinen and {v}an
  Zanten}]{GSvZ2011}
Gasbarra, D., Sottinen, T. and {v}an Zanten, H. (2011).
\newblock Conditional full support of {G}aussian processes with stationary
  increments.
\newblock \emph{J. Appl. Probab.} \textbf{48}, 561--568.

\bibitem[{Guasoni(2002)}]{Guasoni2002}
Guasoni, P. (2002).
\newblock Optimal investment with transaction costs and without
  semimartingales.
\newblock \emph{Ann. Appl. Probab.} \textbf{12}, 1227--1246.

\bibitem[{Guasoni(2006)}]{Guasoni2006}
Guasoni, P. (2006).
\newblock No arbitrage under transaction costs, with fractional {B}rownian
  motion and beyond.
\newblock \emph{Math. Finance} \textbf{16}, 569--582.

\bibitem[{Guasoni \emph{et~al.}(2012)Guasoni, L{\'e}pinette and
  R{\'a}sonyi}]{GLR2012}
Guasoni, P., L{\'e}pinette, E. and R{\'a}sonyi, M. (2012).
\newblock The fundamental theorem of asset pricing under transaction costs.
\newblock \emph{Finance Stoch.} \textbf{16}, 741--777.

\bibitem[{Guasoni \emph{et~al.}(2008)Guasoni, R{\'a}sonyi and
  Schachermayer}]{GRS2008}
Guasoni, P., R{\'a}sonyi, M. and Schachermayer, W. (2008).
\newblock Consistent price systems and face-lifting pricing under transaction
  costs.
\newblock \emph{Ann. Appl. Probab.} \textbf{18}, 491--520.

\bibitem[{Guasoni \emph{et~al.}(2010)Guasoni, R{\'a}sonyi and
  Schachermayer}]{GRS2010}
Guasoni, P., R{\'a}sonyi, M. and Schachermayer, W. (2010).
\newblock The fundamental theorem of asset pricing for continuous processes
  under small transaction costs.
\newblock \emph{Ann. Finance} \textbf{6}, 157--191.

\bibitem[{Hayashi and Koike(2016)}]{HK2016}
Hayashi, T. and Koike, Y. (2016).
\newblock Wavelet-based methods for high-frequency lead-lag analysis.
\newblock Working paper. Available at arXiv: https://arxiv.org/abs/1612.01232.

\bibitem[{Hayashi and Koike(2017)}]{HK2017}
Hayashi, T. and Koike, Y. (2017).
\newblock Multi-scale analysis of lead-lag relationships in high-frequency
  financial markets.
\newblock Working paper. Available at arXiv: https://arxiv.org/abs/1708.03992.

\bibitem[{Herczegh \emph{et~al.}(2014)Herczegh, Prokaj and
  R{\'a}sonyi}]{HPR2014}
Herczegh, A., Prokaj, V. and R{\'a}sonyi, M. (2014).
\newblock Diversity and no arbitrage.
\newblock \emph{Stoch. Anal. Appl.} \textbf{32}, 876--888.

\bibitem[{Hoffmann \emph{et~al.}(2013)Hoffmann, Rosenbaum and
  Yoshida}]{HRY2013}
Hoffmann, M., Rosenbaum, M. and Yoshida, N. (2013).
\newblock Estimation of the lead-lag parameter from non-synchronous data.
\newblock \emph{Bernoulli} \textbf{19}, 426--461.

\bibitem[{Huth and Abergel(2014)}]{HA2014}
Huth, N. and Abergel, F. (2014).
\newblock High frequency lead/lag relationships --- empirical facts.
\newblock \emph{Journal of Empirical Finance} \textbf{26}, 41--58.

\bibitem[{Jacod and Shiryaev(2003)}]{JS}
Jacod, J. and Shiryaev, A.~N. (2003).
\newblock \emph{Limit theorems for stochastic processes}.
\newblock Springer, 2nd edn.

\bibitem[{Jarrow \emph{et~al.}(2009)Jarrow, Protter and Sayit}]{JPS2009}
Jarrow, R.~A., Protter, P. and Sayit, H. (2009).
\newblock No arbitrage without semimartingales.
\newblock \emph{Ann. Appl. Probab.} \textbf{19}, 596--616.

\bibitem[{Kabanov \emph{et~al.}(2002)Kabanov, R{\'a}sonyi and
  Stricker}]{KRS2002}
Kabanov, Y., R{\'a}sonyi, M. and Stricker, C. (2002).
\newblock No-arbitrage criteria for financial markets with efficient friction.
\newblock \emph{Finance Stoch.} \textbf{6}, 371--382.

\bibitem[{Kabanov and Safarian(2009)}]{KS2009}
Kabanov, Y. and Safarian, M. (2009).
\newblock \emph{Markets with transaction costs: Mathematical theory}.
\newblock Springer.

\bibitem[{Kabanov and Stricker(2001)}]{KS2001}
Kabanov, Y.~M. and Stricker, C. (2001).
\newblock The {H}arrison--{P}liska arbitrage pricing theorem under transaction
  costs.
\newblock \emph{J. Math. Econom.} \textbf{35}, 185--196.

\bibitem[{Kawaller \emph{et~al.}(1987)Kawaller, Koch and Koch}]{KKK1987}
Kawaller, I.~G., Koch, P.~D. and Koch, T.~W. (1987).
\newblock The temporal price relationship between {S}\&{P} 500 futures and the
  {S}\&{P} 500 index.
\newblock \emph{Journal of Finance} \textbf{42}, 1309--1329.

\bibitem[{Koike(2017{\natexlab{a}})}]{Koike2017sllag}
Koike, Y. (2017{\natexlab{a}}).
\newblock Inference for time-varying lead-lag relationships from ultra high
  frequency data.
\newblock Working paper. Available at SSRN: https://ssrn.com/abstract=2924301.

\bibitem[{Koike(2017{\natexlab{b}})}]{Koike2017laglan}
Koike, Y. (2017{\natexlab{b}}).
\newblock On the asymptotic structure of {B}rownian motions with a small
  lead-lag effect.
\newblock \emph{J. Japan Statist. Soc. (forthcoming)} .

\bibitem[{Lo and MacKinlay(1990)}]{LM1990llag}
Lo, A.~W. and MacKinlay, A.~C. (1990).
\newblock When are contrarian profits due to stock market overreaction?
\newblock \emph{Review of Financial Studies} \textbf{3}, 175--205.

\bibitem[{Pakkanen(2010)}]{Pakkanen2010}
Pakkanen, M.~S. (2010).
\newblock Stochastic integrals and conditional full support.
\newblock \emph{J. Appl. Probab.} \textbf{47}, 650--667.

\bibitem[{Pakkanen(2011)}]{Pakkanen2011}
Pakkanen, M.~S. (2011).
\newblock Brownian semistationary processes and conditional full support.
\newblock \emph{Int. J. Theor. Appl. Finance} \textbf{14}, 579--586.

\bibitem[{Pakkanen \emph{et~al.}(2017)Pakkanen, Sottinen and Yazigi}]{PSY2017}
Pakkanen, M.~S., Sottinen, T. and Yazigi, A. (2017).
\newblock On the conditional small ball property of multivariate
  {L\'e}vy-driven moving average processes.
\newblock \emph{Stochastic Process. Appl.} \textbf{127}, 749--782.

\bibitem[{Parthasarathy(1967)}]{Parth1967}
Parthasarathy, K.~R. (1967).
\newblock \emph{Probability measures on metric spaces}.
\newblock Academic Press.

\bibitem[{Peyre(2017)}]{Peyre2017}
Peyre, R. (2017).
\newblock Fractional {B}rownian motion satisfies two-way crossing.
\newblock \emph{Bernoulli} \textbf{23}, 3571--3597.

\bibitem[{Poshakwale and Theobald(2004)}]{PT2004}
Poshakwale, S. and Theobald, M. (2004).
\newblock Market capitalisation, cross-correlations, the lead/lag structure and
  microstructure effects in the {I}ndian stock market.
\newblock \emph{Journal of International Financial Markets, Institutions \&
  Money} \textbf{14}, 385--400.

\bibitem[{Ren\`{o}(2003)}]{Reno2003}
Ren\`{o}, R. (2003).
\newblock A closer look at the {E}pps effect.
\newblock \emph{Int. J. Theor. Appl. Finance} \textbf{6}, 87--102.

\bibitem[{Revuz and Yor(1999)}]{RY1999}
Revuz, D. and Yor, M. (1999).
\newblock \emph{Continuous martingales and {B}rownian motion}.
\newblock Springer, 3rd edn.

\bibitem[{Robert and Rosenbaum(2010)}]{RR2010}
Robert, C.~Y. and Rosenbaum, M. (2010).
\newblock On the limiting spectral distribution of the covariance matrices of
  time-lagged processes.
\newblock \emph{J. Multivariate Anal.} \textbf{101}, 2434--2451.

\bibitem[{Rudin(1991)}]{Rudin1991}
Rudin, W. (1991).
\newblock \emph{Functional analysis}.
\newblock McGraw-Hill, Inc., 2nd edn.

\bibitem[{Sayit(2013)}]{Sayit2013}
Sayit, H. (2013).
\newblock Absence of arbitrage in a general framework.
\newblock \emph{Ann. Finance} \textbf{9}, 611--624.

\bibitem[{Sayit and Viens(2011)}]{SV2011}
Sayit, H. and Viens, F. (2011).
\newblock Arbitrage-free models in markets with transaction costs.
\newblock \emph{Electron. Commun. Probab.} \textbf{16}, 614--622.

\bibitem[{Schachermayer(2004)}]{Schachermayer2004}
Schachermayer, W. (2004).
\newblock The fundamental theorem of asset pricing under proportional
  transaction costs in finite discrete time.
\newblock \emph{Math. Finance} \textbf{14}, 19--48.

\bibitem[{Stoll and Whaley(1990)}]{SW1990}
Stoll, H.~R. and Whaley, R.~E. (1990).
\newblock The dynamics of stock index and stock index futures returns.
\newblock \emph{Journal of Financial and Quantitative Analysis} \textbf{25},
  441--468.

\bibitem[{Yan(1998)}]{Yan1998}
Yan, J.-A. (1998).
\newblock A new look at the fundamental theorem of asset pricing.
\newblock \emph{J. Korean Math. Soc.} \textbf{35}, 659--673.

\end{thebibliography}

}

\end{document}